\documentclass[12pt,reqno]{amsart}

\usepackage{hyperref}
\hypersetup{
colorlinks=true,
linkcolor={blue}
}

\headsep 0.2in
\evensidemargin=-0.5cm
\oddsidemargin=0.5cm
\textwidth=17cm
\textheight=21cm

\usepackage[font=small,skip=0pt]{caption}
\setlength{\abovecaptionskip}{15pt plus 0pt minus 0pt}

\usepackage{amsmath}
\usepackage{amsthm}
\usepackage{amssymb}
\usepackage{graphicx}
\graphicspath{{Figures_Var/}}

\newtheorem{theorem}{Theorem}
\newtheorem{definition}{Definition}
\newtheorem{lemma}{Lemma}
\newtheorem{remark}{Remark}

\newcommand{\R}{\mathbb{R}}

\begin{document}

\title[Solitary waves with intensity-dependent dispersion]{\bf Solitary waves with intensity-dependent dispersion: \\ variational characterization}

\author{D.E. Pelinovsky}
\address{Department of Mathematics and Statistics, McMaster
	University, Hamilton, Ontario, Canada, L8S 4K1}

\author{R.M. Ross}
\address{Department of Mathematics and Statistics,
         University of Massachusetts, Amherst, MA 01003-4515, USA}
     
\author{P.G. Kevrekidis}
\address{Department of Mathematics and Statistics,
 University of Massachusetts, Amherst, MA 01003-4515, USA}

\date{\today}

\begin{abstract}
A continuous family of singular solitary waves exists in a prototypical system with intensity-dependent dispersion. The family has a cusped soliton as the limiting lowest energy state and is formed by the solitary waves with bell-shaped heads of different lengths. We show that this family can be obtained variationally by minimization of mass at fixed energy and fixed length of the bell-shaped head. We develop a weak formulation for the singular solitary waves and prove that they are stable under perturbations which do not change the length of the bell-shaped head. Numerical simulations confirm the stability of the singular solitary waves. 
\end{abstract}

\maketitle

\section{Introduction}

Dispersive nonlinear systems typically feature the interplay of
dispersion
and nonlinearity that is prototypically represented through the
well-known model of the nonlinear Schr{\"o}dinger (NLS)
equation~\cite{NLS1,NLS2}.
This interplay is responsible for the formation of smooth solitary waves in a wide class of dispersive nonlinear systems. Nevertheless, 
some physical systems feature intensity-dependent dispersion (IDD);
relevant examples include the femtosecond pulse propagation in quantum well
waveguides~\cite{koser}, the electromagnetically induced transparency in
coherently prepared multistate atoms~\cite{greentree}, and 
fiber-optics communication systems~\cite{OL2020}. Such features
have been discussed in the context of both photonic, and in phononic
(acoustic)
crystals~\cite{mankeltow} and have even been argued to arise 
at the quantum-mechanical level (between Fock states) in the work of~\cite{kirchmair}.

Out focus here is in
addressing the NLS model with the prototypical IDD 
introduced in the context of
nonlinear optics in~\cite{OL2020}:
\begin{align}
\label{nls}
i\psi_t + (1-|\psi|^2)\psi_{xx} = 0,
\end{align}
where $\psi = \psi(x,t)$ is the complex wave function. It was shown in \cite{OL2020} that the NLS equation (\ref{nls}) admits formally two conserved quantities:
\begin{equation}
\label{mass-energy}
Q(\psi) = -\int_{\mathbb{R}} \log|1-|\psi|^2| dx, \quad
E(\psi) = \int_{\mathbb{R}} |\psi_x|^2 dx.
\end{equation}
The two conserved quantities have the meaning of the mass and energy
of the optical system. The conserved quantities (\ref{mass-energy}) 
are defined in the subspace of $H^1$ functions given by
\begin{equation}
\label{energy-space}
X = \left\{ u \in H^1(\mathbb{R}) : \quad \left| \int_{\mathbb{R}} \log|1-|u|^2| dx \right| < \infty \right\},
\end{equation}
which is the energy space of the NLS equation (\ref{nls}).

Our previous work in \cite{RKP} was devoted to the solitary waves 
in the NLS-IDD equation (\ref{nls}). Solitary waves arise as the
standing wave solutions of the form $\psi(x,t) = e^{i \omega t} u(x)$
with real $\omega$ and $u(x)$ (without loss of generality)
satisfying formally the nonlinear
differential equation
\begin{align}
\label{ode}
\omega u = (1-u^2) u''(x)
\end{align}
subject to the decay to zero at infinity. Since the classical solutions 
to the differential equation (\ref{ode}) are singular at the points of
$x$ where $u(x) = \pm 1$, solitary waves have to be defined in a weak 
formulation.

In the present work, we revisit this problem and propose a 
weak formulation which enables us to establish a notion
of Lyapunov stability of the singular solitary waves 
in the NLS-IDD equation (\ref{nls}). We use direct
numerical simulations to corroborate the theoretical results.

Our presentation is structured as follows. In section 2, we
present the mathematical background, basic definitions of
the problem, and state the main theorems. In section 3, we
prove the main results, while in section 4,  we illustrate them 
with numerical simulations. Finally, in section 5, we briefly summarize our findings.

\section{Mathematical Setup and Main Results}

We start with the definition of weak solutions of the differential equation 
(\ref{ode}) which was introduced in \cite{RKP}.

\begin{definition}
	\label{def-weak-solution}
	We say that $u \in H^1(\mathbb{R})$ is a weak solution 
	of the differential equation (\ref{ode}) if it satisfies 
	the following equation 
	\begin{equation}
	\label{ode-weak}
	\omega \langle u, \varphi \rangle + \langle (1- u^2) u', \varphi' \rangle 
	- 2 \langle u (u')^2, \varphi \rangle = 0, \quad \mbox{\rm for every } \varphi \in H^1(\mathbb{R}),
	\end{equation}
	where $\langle \cdot, \cdot \rangle$ is the standard inner product in $L^2(\mathbb{R})$. 
\end{definition}

The weak solutions are obtained as parts of the smooth orbits of the second-order differential equation (\ref{ode}). The smooth orbits satisfy 
the first-order invariant in the form 
\begin{align}
\label{ode-energy}
\frac{1}{2} \left(\frac{du}{dx} \right)^2 + \frac{\omega}{2} \log|1-u^2| = C,
\end{align}
where the value of $C$ is constant along every smooth orbit. 
It was proven in \cite{RKP} that a continuous family of weak 
solutions exists for each $\omega > 0$. The family describes positive and single-humped solitary waves shown on Fig. \ref{bluesolns}. The phase portrait computed from the energy levels $C$ is shown on Fig. \ref{ppandV_b=1}.

\begin{figure}[hbt]
	\centering
	\includegraphics[width=4cm,height = 4cm]{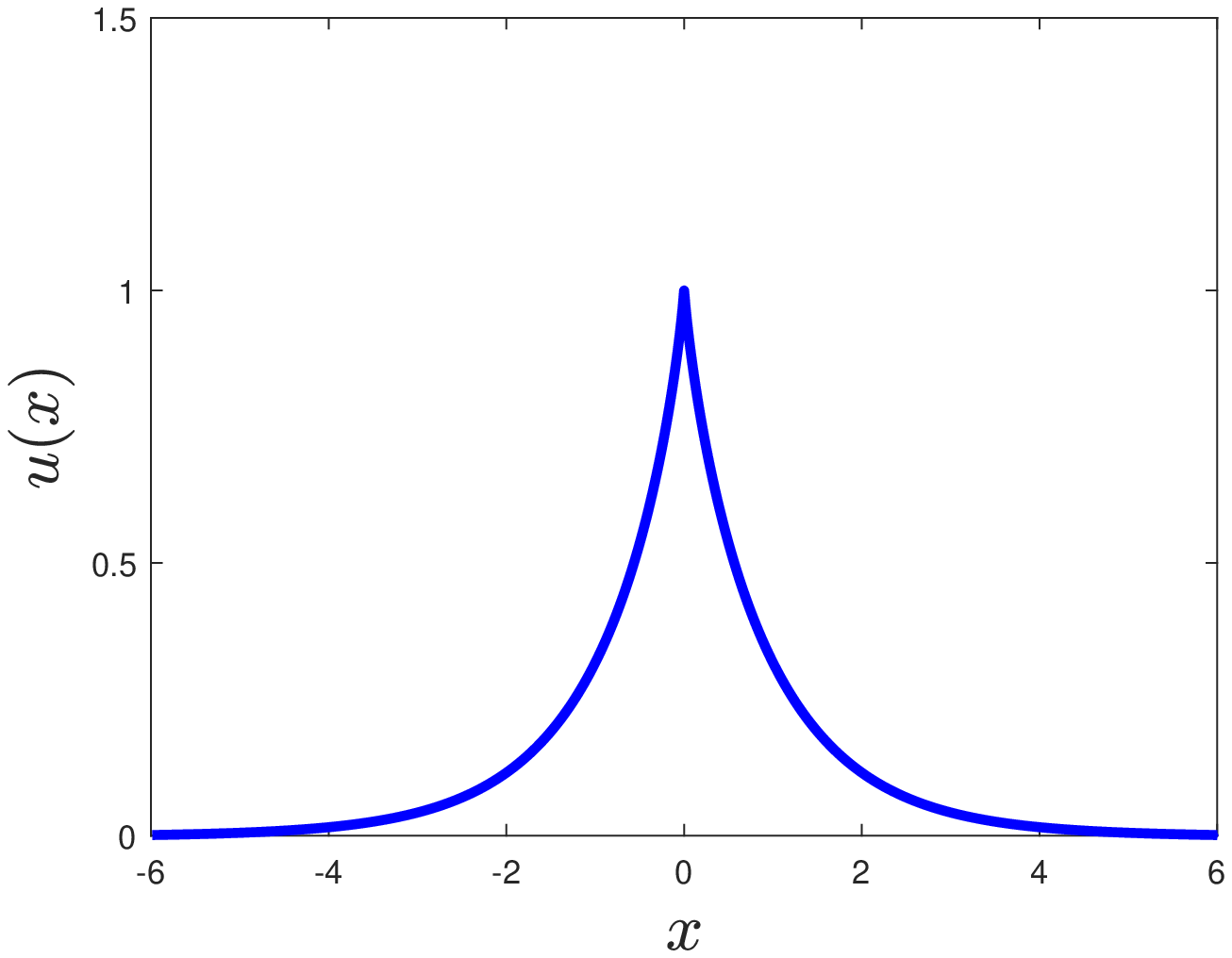}
	\includegraphics[width=4cm,height = 4cm]{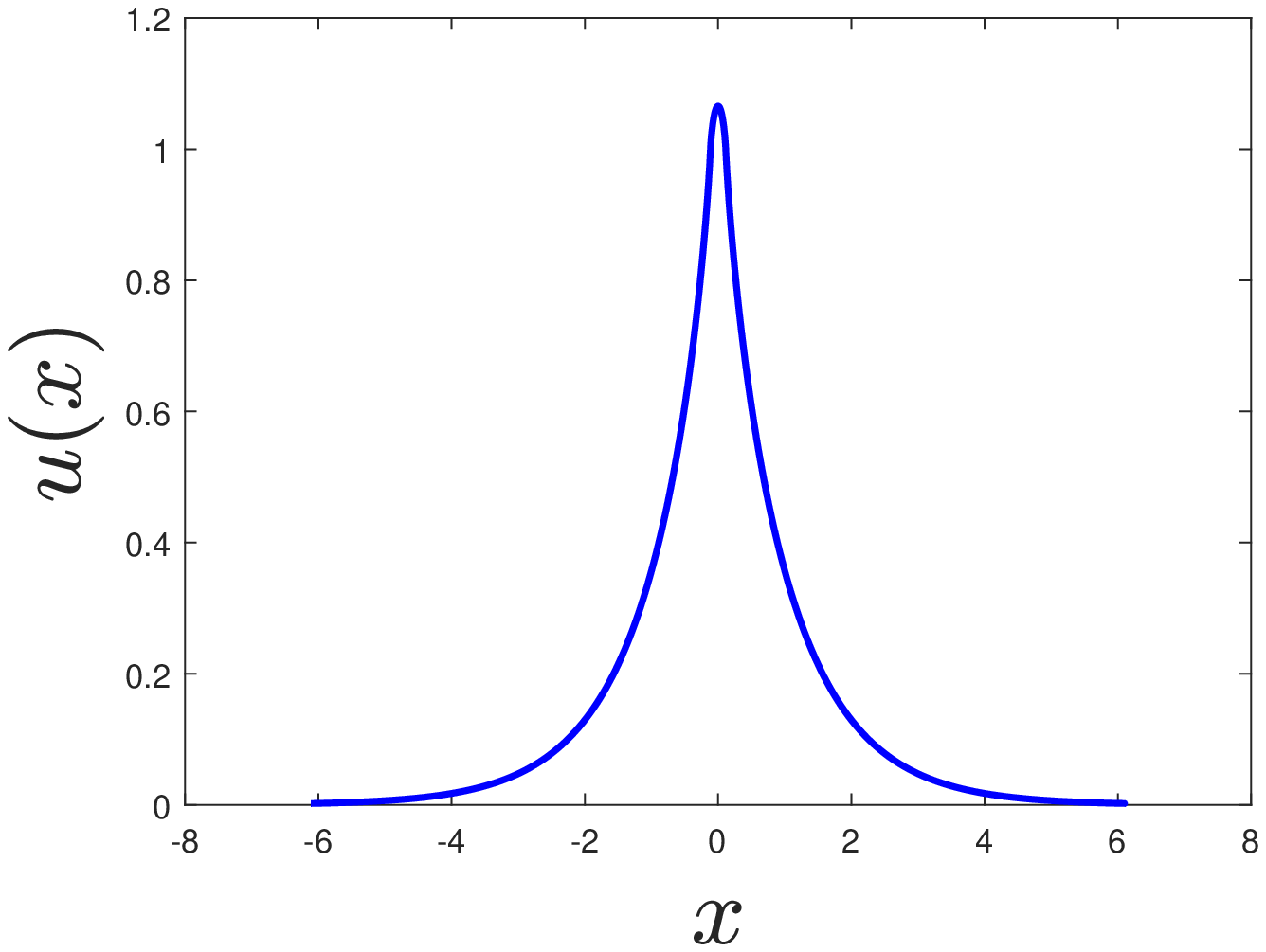} 
	\includegraphics[width=4cm,height = 4cm]{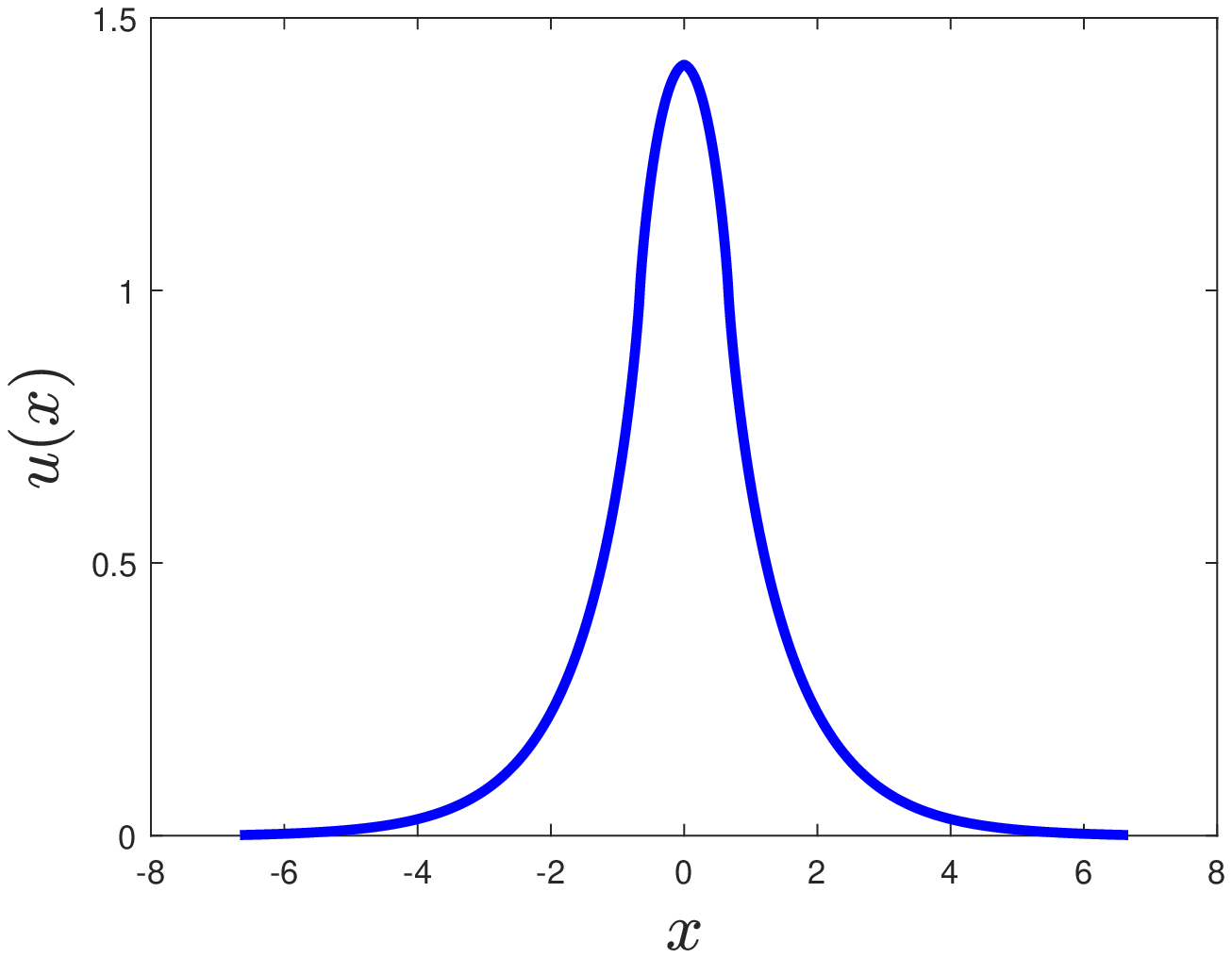}
	\includegraphics[width=4cm,height = 4cm]{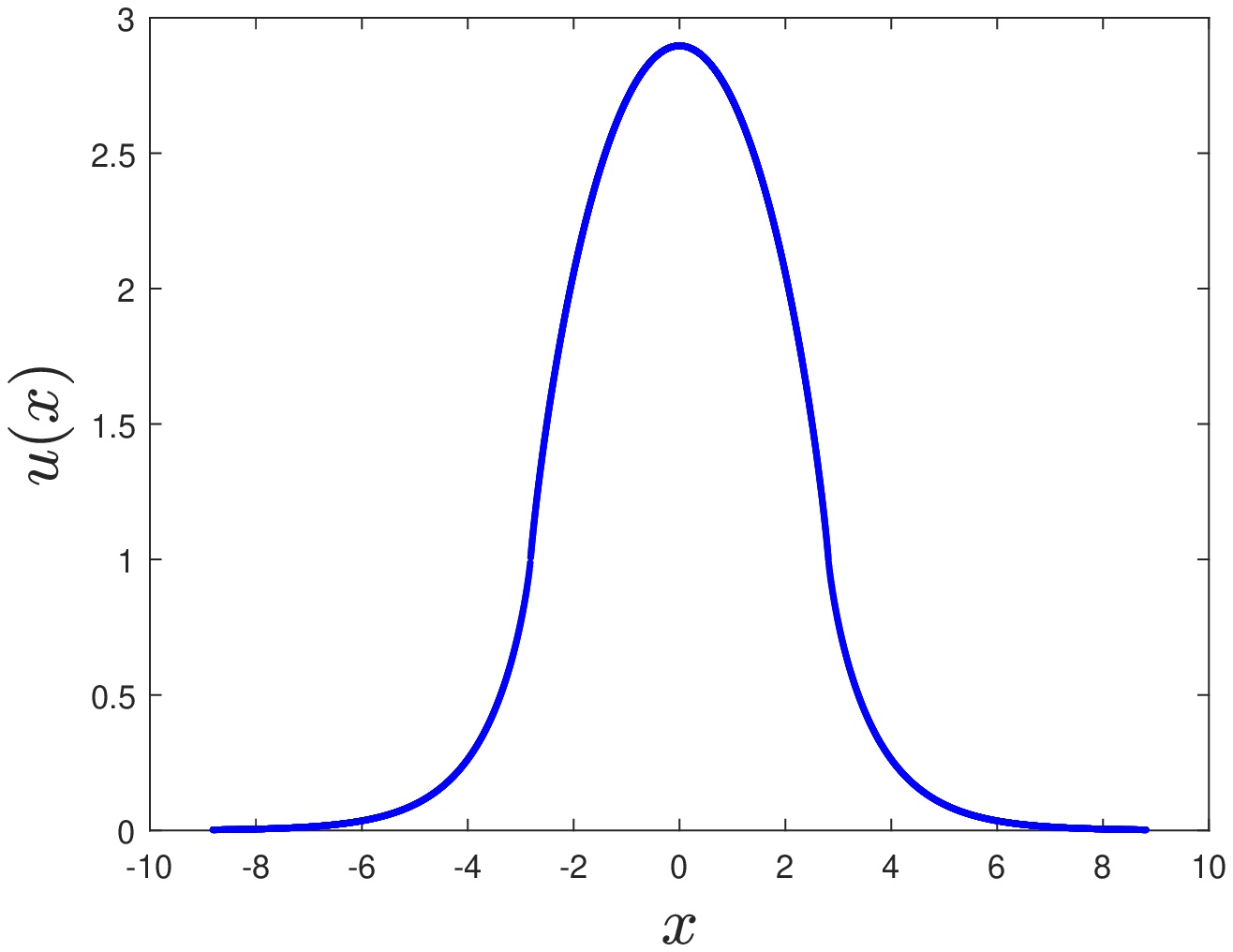}  
	\caption{The spatial profiles of four single-humped solitary wave solutions of the second-order equation (\ref{ode}) for $\omega = 1$. From left to right: $C=-\infty$ (cusped soliton), $C=-1$, $C=0$, and $C=1$.}
	\label{bluesolns}
\end{figure}

\begin{figure}[hbt]
	\centering
	\includegraphics[width=12cm,height=9cm]{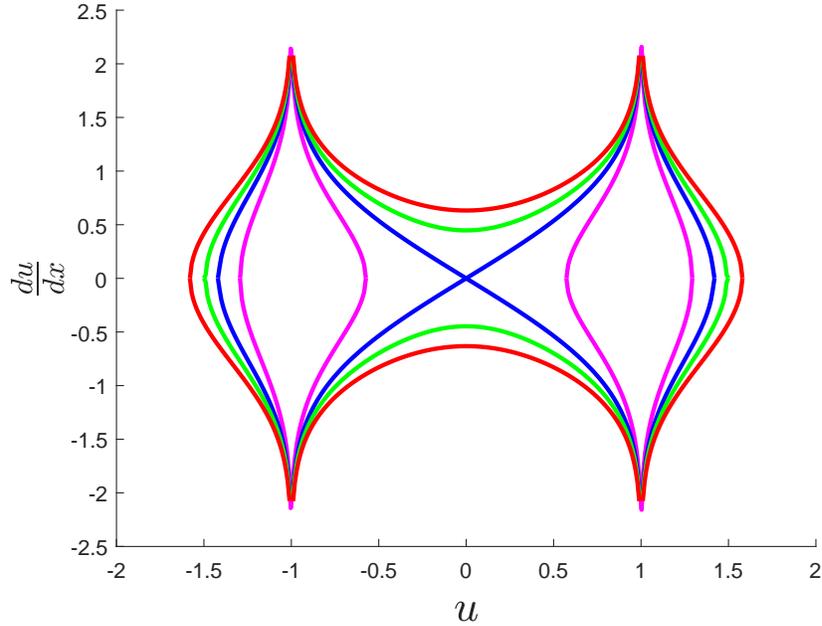}
	\caption{The phase portrait computed from the level curves (\ref{ode-energy}).}
	\label{ppandV_b=1}
\end{figure}

The exponentially decaying tails of the solitary waves correspond to 
the level $C = 0$, whereas the head 
of the bell-shaped solitary waves corresponds to an arbitrarily fixed value of  
$C$. The limiting cusped soliton satisfying $0 < u(x) \leq 1$ represents the lowest energy state in the family and corresponds formally to the 
limit $C \to -\infty$. The bell-shaped solitary 
wave for different values of $C \in \mathbb{R}$ has the head located in the interval $[-\ell_C,\ell_C]$, where $1 < u(x) \leq \sqrt{1+e^{2C}}$ 
for $x \in (-\ell_C,\ell_C)$. The tails and the head of the bell-shaped solitary waves are connected at the points 
$x = \pm \ell_C$, where $u(\pm \ell_C) = 1$ and $u'(\pm \ell_C)$ diverge. With the precise analysis of the asymptotic behavior of the solutions near the singularities (similar to \cite{Alfimov}), it was proven that 
the solutions satisfy the weak formulation in Definition \ref{def-weak-solution} 
and belong to the energy space $X$. The following theorem gives the summary 
of results obtained in \cite{RKP} under the normalization $\omega = 1$.

\begin{theorem}
	\label{theorem-main1}
	Fix $\omega = 1$. There exists a continuous family 
	of weak, positive, and single-humped solutions of Definition 
	\ref{def-weak-solution} parametrized by $C \in \mathbb{R}$ such that 
	\begin{equation}
	\label{cont-family}
	u_C(x) = \left\{ \begin{array}{ll} u_{{\rm head},C}(x), \quad & x \in [-\ell_C,\ell_C], \\
	u_{\rm cusp}(|x| - \ell_C), \quad & |x| > \ell_C, \end{array} \right.
	\end{equation}
	where $\ell_C$ is uniquely defined by
	\begin{equation}
	\label{def-L-C}
	\ell_C := \int_1^{\sqrt{1+e^{2C}}} \frac{du}{\sqrt{2C - \log(u^2-1)}},
	\end{equation}
	$u_{{\rm head},C}(x)$ for $x \in [-\ell_C,\ell_C]$ is defined
	implicitly by
	\begin{equation}
	\ell_C-|x| = \int_1^u \frac{d\xi}{\sqrt{2C-\log(\xi^2-1)}}, \quad u \in (1,\sqrt{1+e^{2C}}],
	\label{bell-head-continuous}
	\end{equation}
	and $u_{\rm cusp}(x)$ for $x \in \mathbb{R}$ is defined implicitly by
	\begin{align}
	|x| = \int_{u}^1 \frac{d\xi}{\sqrt{-\log(1-\xi^2)}}, \quad u \in (0,1).
	\label{0th}
	\end{align}
	Moreover, $u_C \in X \subset H^1(\mathbb{R})$ with the following 
	singular behavior as $|x| \to \ell_C$:
	\begin{align}
\label{asympt-sol-bell-cont}
	u_C(x) = 1 + (\ell_C-|x|) \sqrt{|\log|\ell_C-|x|||} \left[ 1 + \mathcal{O}\left(\frac{\log|\log|\ell_C-|x|||}{|\log|\ell_C-|x|||}\right)
\right],
	\end{align} 
	where $\mathcal{O}(v)$ denotes a $C^1$ function of $v$ at either side of $v = 0$.
\end{theorem}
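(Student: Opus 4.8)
The plan is to construct the profile $u_C$ by explicit quadrature from the first integral (\ref{ode-energy}) and then verify, in order, that the quadratures produce the stated implicit representations, that the glued profile lies in $X \subset H^1(\R)$ with the claimed singular expansion, that it satisfies the weak equation (\ref{ode-weak}), and that $C \mapsto u_C$ is continuous (the last three steps being essentially a reconstruction of the analysis in \cite{RKP}). On any interval where $u \neq \pm 1$, dividing (\ref{ode}) with $\omega = 1$ by $1-u^2$, multiplying by $u'$ and integrating recovers $(u')^2 = 2C - \log|1-u^2|$. Imposing $u, u' \to 0$ at infinity forces the tail onto the level $C = 0$, which yields $u_{\rm cusp}$ via (\ref{0th}); the integral there converges at $u = 1$ (the integrand tends to $0$), so $u_{\rm cusp}(0) = 1$ with diverging derivative, while linearizing near $u = 0$ gives exponential decay. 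On the head $u > 1$, so $(u')^2 = 2C - \log(u^2-1)$ with turning point $u_{\max} = \sqrt{1+e^{2C}}$; the two monotone branches give (\ref{bell-head-continuous}), and $\ell_C$ in (\ref{def-L-C}) is finite because the integrand has only an integrable inverse--square--root singularity at $u = u_{\max}$ and vanishes as $u \to 1$. Gluing the head on $[-\ell_C,\ell_C]$ to translates of $u_{\rm cusp}$ on $|x| > \ell_C$ as in (\ref{cont-family}) produces a continuous, positive, even function, strictly increasing on $(-\infty,0)$ and strictly decreasing on $(0,\infty)$, hence single--humped, with matched value $u = 1$ at $x = \pm\ell_C$.

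The crux is the behavior near $x = \pm\ell_C$. On the head side, let $v := \ell_C - |x|$ and $w := u - 1$ be small and positive, and write $\xi^2 - 1 = s(2+s)$ with $s = \xi - 1$ in (\ref{bell-head-continuous}) to obtain
\[
v = \int_0^w \frac{ds}{\sqrt{2C - \log s - \log(2+s)}} = \left( 1 + \mathcal{O}\!\left( \tfrac{1}{|\log w|} \right) \right) \int_0^w \frac{ds}{\sqrt{-\log s}} ,
\]
the last step being uniform in $s \in (0,w)$ since $|\log s| \ge |\log w|$ there. Integration by parts gives $\int_0^w (-\log s)^{-1/2}\,ds = w\,(-\log w)^{-1/2}\,\bigl(1 + \mathcal{O}(1/|\log w|)\bigr)$; taking logarithms and bootstrapping $\log w = \log v + \tfrac12 \log|\log v| + \mathcal{O}(\log|\log v|/|\log v|)$, then exponentiating, yields (\ref{asympt-sol-bell-cont}). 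An analogous computation on the cusp side from (\ref{0th}) (with $C = 0$ and $\log(1-\xi^2) \sim \log\bigl(2(1-\xi)\bigr)$) gives the same expansion, so (\ref{asympt-sol-bell-cont}) holds uniformly across $x = \pm\ell_C$. Differentiating the quadrature shows $(u')^2 \sim |\log v|$ and $|1-u^2| = u^2 - 1 \sim 2v\sqrt{|\log v|}$, so that $(u')^2$ and $\bigl|\log|1-u^2|\bigr|$ are both $\sim |\log v|$ and hence locally integrable near the singularities; with the exponential decay of the tails this gives $u_C \in H^1(\R)$ and $\int_\R |\log|1-u_C^2||\,dx < \infty$, i.e. $u_C \in X$. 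The $C^1$ dependence of the remainder in (\ref{asympt-sol-bell-cont}) on $\ell_C - |x|$, separately on the head side and the cusp side, follows by repeating the expansion for the derivative of the quadrature and invoking the implicit function theorem for $v \neq 0$ together with the existence of one--sided limits at $v = 0$.

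Finally, the algebraic identity $(1-u^2)u'' = \bigl((1-u^2)u'\bigr)' + 2u(u')^2$ turns (\ref{ode}) on each smoothness interval into $\omega u = \bigl((1-u^2)u'\bigr)' + 2u(u')^2$. From the expansion above, $(1-u^2)u' = \mathcal{O}(v|\log v|) \to 0$ and $u(u')^2 \sim |\log v| \in L^1$ near $x = \pm\ell_C$, and both decay exponentially on the tails, so $(1-u^2)u' \in L^2(\R)$ and $u(u')^2 \in L^1(\R)$, and all three pairings in (\ref{ode-weak}) are finite for $\varphi \in H^1(\R) \hookrightarrow L^\infty(\R)$. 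Splitting $\R$ at $\pm\ell_C$, integrating $\langle (1-u^2)u', \varphi' \rangle$ by parts on each of the three open pieces, using the classical ODE there, and noting that the boundary terms $\bigl[(1-u^2)u'\varphi\bigr]$ vanish at $x = \pm\ell_C$ because $(1-u^2)u' \to 0$ from both sides, yields (\ref{ode-weak}) for every $\varphi \in H^1(\R)$. Continuity of $C \mapsto u_C$ in $H^1(\R)$ and in $X$ then follows from the smooth dependence of $u_{\max}$, $\ell_C$ and the integrands in (\ref{def-L-C})--(\ref{bell-head-continuous}) on $C$, via dominated convergence in the quadratures together with the uniform--in--$C$ control of the singular layer from the previous step. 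The principal obstacle is exactly that step: inverting the implicit quadrature to extract the sharp two--term asymptotics (\ref{asympt-sol-bell-cont}) and, above all, establishing the $C^1$ character of the remainder at the singularity, which is precisely where the delicate turning--point analysis in the spirit of \cite{Alfimov} is needed.
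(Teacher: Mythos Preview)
The paper does not prove Theorem~\ref{theorem-main1}; it is stated as a summary of results established in \cite{RKP} (see the sentence immediately preceding the theorem), so there is no in-paper proof to compare against. Your reconstruction follows exactly the natural route---first-integral quadrature, gluing head to tails at $u=1$, asymptotic inversion of the quadrature near the singularity, and verification of (\ref{ode-weak}) via the identity $(1-u^2)u'' = \bigl((1-u^2)u'\bigr)' + 2u(u')^2$ together with $(1-u^2)u' \to 0$ at $x=\pm\ell_C$---and this is presumably what \cite{RKP} does, as the paper itself signals by pointing to ``the precise analysis of the asymptotic behavior of the solutions near the singularities (similar to \cite{Alfimov}).''

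Your computations are correct at the level of detail given: the convergence of $\ell_C$, the leading inversion $w \sim v\sqrt{|\log v|}$, the local integrability of $(u')^2$ and $|\log|1-u^2||$, and the vanishing of boundary terms all check out. Your own closing remark is accurate: the one place where the sketch is not yet a proof is the $C^1$ character of the remainder in (\ref{asympt-sol-bell-cont}) on each side of $v=0$. The phrase ``repeat the expansion for the derivative and invoke one-sided limits'' is the right plan, but carrying it out requires tracking one more order in the bootstrap $\log w = \log v + \tfrac12\log|\log v| + \cdots$ and controlling the derivative of the error uniformly as $v\to 0^{\pm}$, which is precisely the turning-point analysis you defer to \cite{Alfimov}.
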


The purpose of this work is to develop the variational characterization of the solitary wave solutions of Theorem \ref{theorem-main1} in order to prove their Lyapunov stability with respect to small perturbations. In order to place the solutions in the variational context and to deal with the singularity of the solitary wave solutions, we have to use a new definition of weak solutions.

\begin{definition}
	\label{def-weak-solution-new}
	Fix $L > 0$ and define 
	\begin{equation}
	\label{X-minimizers}
	X_L := \left\{ u \in X : \quad  u(x) > 1, \quad x \in (-L,L) \quad \mbox{\rm and}  \quad u(x) \leq 1, \quad |x| \geq L \right\}.
	\end{equation}
	Pick $u_L \in X_L$ satisfying
	$$
	\lim_{|x| \to L} \frac{u_L(x) - 1}{(L-|x|) \sqrt{|\log|L-|x|||}} = 1.
	$$
	We say that $u \in X_L \subset H^1(\mathbb{R})$ is a weak solution 
	of the differential equation (\ref{ode}) if it satisfies 
	the following equation 
	\begin{equation}
	\label{ode-weak-new}
	\langle u', \varphi' \rangle + \omega \langle (1-u^2)^{-1} u, \varphi \rangle = 0, \quad \mbox{\rm for every } \varphi \in H^1_L,
	\end{equation}
	where $H^1_L := \left\{ \varphi \in H^1(\mathbb{R}) : \;\; (1-u_L^2)^{-1} \varphi \in L^2(\mathbb{R}) \cap L^{\infty}(\mathbb{R}) \right\}$.
\end{definition}

The standard way to characterize smooth solitary waves in the NLS equation is 
to look for minimizers of energy $E(u)$ at fixed mass $Q(u)$ \cite{W86}. However, Lemmas \ref{lemma-1}, \ref{lemma-1-energy}, and \ref{lemma-1-mass} (proven by using methods developed in \cite{KMPX,NP}) show that the mappings 
$C \mapsto \ell_C$ and $C \mapsto E(u_C)$ are monotone whereas the mapping 
$C \mapsto Q(u_C)$ is non-monotone. As a result, {\em we develop a novel variational characterization of the singular solitary waves by looking at minimizers of mass $Q(u)$ at fixed energy $E(u)$.} 

The stationary equation (\ref{ode}) is the Euler--Lagrange equation for the action functional
\begin{equation}
\label{action}
\Lambda_{\omega}(u) = Q(u) + \omega^{-1} E(u),
\end{equation}
where $Q(u)$ and $E(u)$ are the conserved mass and energy of the NLS equation (\ref{nls}) given by (\ref{mass-energy}). Expanding formally 
\begin{equation}
\Lambda_{\omega}(u + \varphi) - \Lambda_{\omega}(u) = 2 \langle (1-u^2)^{-1} u, \varphi \rangle + 2 \omega^{-1} \langle u',\varphi' \rangle + 
\mathcal{O}(\| \varphi' \|^2_{L^2} + \| (1-u^2)^{-1} \varphi \|_{L^2\cap L^{\infty}}^2) 
\end{equation}
for $u \in X_L$ and $\varphi \in H^1_L$ yields that the weak solution of Definition \ref{def-weak-solution-new} is a critical point of $\Lambda_{\omega}(u)$ in $X_L$. Thus, we can consider the constrained minimization problem
\begin{equation}
\label{minimizers}
\mathcal{Q}_{\mu,L} := \inf_{u \in X_L} \{ Q(u) : \quad E(u) = \mu \}.
\end{equation}
In the context of the variational problem (\ref{minimizers}),
the parameter $\omega > 0$ serves as the Lagrange multiplier 
and the parameter $L > 0$ defines the support of the head of the solitary wave 
as in Definition \ref{def-weak-solution-new}. The solitary wave is weakly singular at $x = \pm L$.

The following theorem formulates the main result of the paper. 
The proof of this theorem relies on the monotonicity of the mappings 
$C \mapsto \ell_C$ and $C \mapsto E(u_C)$ (Lemmas \ref{lemma-1} and \ref{lemma-1-energy}), an elementary scaling argument (Lemma \ref{lemma-2}), and convexity of the second variation of the action functional $\Lambda_{\omega}(u)$ (Lemma \ref{lemma-3}).

\begin{theorem}
	\label{theorem-main2}
	For every $\mu > 0$ and $L > 0$, there exists a ground state, 
	that is, the minimizer of the constrained variational problem (\ref{minimizers}) in $X_L$. The minimizer coincides with a rescaled version of $u_C$ in Theorem \ref{theorem-main1} for some $C = C_{\mu,L}$. 
\end{theorem}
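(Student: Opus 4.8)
The plan is to bypass the direct method of the calculus of variations --- delicate here because of the weak singularity and the absence of compactness on $\mathbb{R}$ --- and instead to produce the minimizer explicitly and certify its minimality by convexity. The first step is to exploit the scaling symmetry of (\ref{ode}): if $u$ solves (\ref{ode}) with $\omega = 1$, then $u(\sqrt{\omega}\,\cdot)$ solves it with parameter $\omega$, and under this rescaling $\ell_C \mapsto \ell_C/\sqrt{\omega}$, $E \mapsto \sqrt{\omega}\, E$ and $Q \mapsto Q/\sqrt{\omega}$. Given $\mu, L > 0$, I want $C$ and $\omega$ so that $\tilde u := u_C(\sqrt{\omega}\,\cdot)$ has head length $L$ and energy $\mu$; these conditions are $\ell_C = L\sqrt{\omega}$ and $\sqrt{\omega}\,E(u_C) = \mu$, equivalently $\ell_C\, E(u_C) = \mu L$ together with $\omega = (\ell_C/L)^2$. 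By Lemmas \ref{lemma-1} and \ref{lemma-1-energy} the maps $C \mapsto \ell_C$ and $C \mapsto E(u_C)$ are continuous and strictly increasing, with $\ell_C \to 0$ and $E(u_C) \to E(u_{\rm cusp}) \in (0,\infty)$ as $C \to -\infty$ and $\ell_C, E(u_C) \to \infty$ as $C \to +\infty$; hence $C \mapsto \ell_C\,E(u_C)$ is a continuous strictly increasing bijection of $\mathbb{R}$ onto $(0,\infty)$ and there is a unique $C = C_{\mu,L}$ solving $\ell_C\,E(u_C) = \mu L$. This scaling bookkeeping is what Lemma \ref{lemma-2} records. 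By rescaling (\ref{asympt-sol-bell-cont}), the resulting $\tilde u$ lies in $X_L$ with the singular behaviour $\tilde u(x) - 1 \sim (L-|x|)\sqrt{|\log|L-|x|||}$ as $|x| \to L$ required in Definition \ref{def-weak-solution-new}, and $E(\tilde u) = \mu$, so it is an admissible competitor in (\ref{minimizers}).

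The second step is to show $\tilde u$ realizes the infimum. Since $\tilde u$ solves the weak equation (\ref{ode-weak-new}) with this $\omega$, the expansion of $\Lambda_{\omega}$ displayed just before (\ref{minimizers}) shows that $\tilde u$ is a critical point of $\Lambda_{\omega}(u) = Q(u) + \omega^{-1} E(u)$ on $X_L$. The crucial structural fact (Lemma \ref{lemma-3}) is that $\Lambda_{\omega}$ is convex along segments of $X_L$: the set $X_L$ is convex (a convex combination of two of its elements is still $> 1$ on $(-L,L)$ and $\leq 1$ for $|x| \geq L$), $E$ is quadratic, and the integrand of $Q$ obeys $\frac{d^2}{du^2}(-\log|1-u^2|) = \frac{2(1+u^2)}{(1-u^2)^2} > 0$ on each of the intervals $(1,\infty)$ and $(-1,1)$ --- in particular on the ranges actually attained on $(-L,L)$ and on $\{|x|\geq L\}$ along such a segment. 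Consequently, for $v \in X_L$ and $u_t := (1-t)\tilde u + t v$, the function $t \mapsto \Lambda_{\omega}(u_t)$ is convex on $[0,1]$ with vanishing derivative at $t = 0^+$, hence non-decreasing, so $\Lambda_{\omega}(\tilde u) \leq \Lambda_{\omega}(v)$.

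To conclude, let $v \in X_L$ with $E(v) = \mu$; we may assume $Q(v) < \infty$. Then $Q(v) = \Lambda_{\omega}(v) - \omega^{-1}\mu \geq \Lambda_{\omega}(\tilde u) - \omega^{-1}\mu = Q(\tilde u)$ because $E(\tilde u) = \mu$ as well, so $Q(\tilde u) = \mathcal{Q}_{\mu,L}$ and the infimum in (\ref{minimizers}) is attained at $\tilde u = u_{C_{\mu,L}}(\sqrt{\omega}\,\cdot)$. If $v$ is another minimizer, then $t \mapsto \Lambda_{\omega}(u_t)$ is constant on $[0,1]$, forcing the non-negative second-variation integrand to vanish; strict positivity of $\frac{2(1+u^2)}{(1-u^2)^2}$ on $(-L,L)$ then gives $v = \tilde u$ there, and $(v-\tilde u)' = 0$ together with decay at infinity gives $v \equiv \tilde u$, so the minimizer is unique.

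The step I expect to be the main obstacle is making this ``critical point plus convexity'' argument rigorous at the weak singularities $x = \pm L$. Equation (\ref{ode-weak-new}) only tests against $\varphi \in H^1_L$, whereas a general admissible direction $\varphi = v - \tilde u$ need not satisfy $(1-u_L^2)^{-1}\varphi \in L^{\infty}$, so the vanishing of the first variation of $\Lambda_{\omega}$ at $\tilde u$ in the direction $v - \tilde u$ is not automatic. I would recover it by truncating $v - \tilde u$ near $x = \pm L$ to obtain $\varphi_n \in H^1_L$, applying (\ref{ode-weak-new}) to each $\varphi_n$, and letting $n \to \infty$: for the $Q$-term this is dominated convergence, using that $(1-\tilde u^2)^{-1}\tilde u\,(v - \tilde u) \in L^1$ near $\pm L$ (the $H^1$-H\"older bound on $v - 1$ against the $(L-|x|)\sqrt{|\log|L-|x|||}$ rate of $\tilde u - 1$), and for the $E$-term a direct estimate using that $\tilde u'$ blows up only logarithmically and hence lies in $L^2$ near $\pm L$. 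One must also check that $X_L$ and the segments used above stay inside the energy space $X$ --- i.e. that $\log|1-u^2|$ is integrable --- which again follows from the prescribed rate at $x = \pm L$ together with the boundedness of $H^1$ functions, using the convexity of $u \mapsto -\log|1-u^2|$ on each relevant interval for convex combinations. Once these points are in place, the three steps above complete the proof of Theorem \ref{theorem-main2}.
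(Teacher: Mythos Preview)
Your first step --- producing the unique candidate $(C_{\mu,L},\omega)$ from scaling and the monotonicity of $C\mapsto\ell_C$ and $C\mapsto E(u_C)$ --- coincides with the paper's Lemma~\ref{lemma-2}. The route to minimality, however, is genuinely different. The paper's Lemma~\ref{lemma-3} expands $\Lambda_{\omega=1}$ to second order about $u_C$ and proves coercivity of the two quadratic forms $Q_{\pm}$ on $H^1_{\ell_C}$ (for $Q_-$, whose potential $(1-u_C^2)^{-1}$ is sign-indefinite, this requires a Sturm-theory argument); this yields only that $u_C$ is a \emph{strict local} minimizer, and globality is then deduced by appealing to Theorem~\ref{theorem-main1} to assert that no other critical point of (\ref{ode-weak-new}) exists in $X_L$. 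Your argument instead observes that $X_L$ is convex and that $u\mapsto -\log|1-u^2|$ is convex on each of $(1,\infty)$ and $(-1,1)$, giving global convexity of $\Lambda_\omega$ on (the relevant part of) $X_L$ directly; a critical point is then automatically the global minimum, and the appeal to the classification of weak solutions is avoided. This is more elementary for the real-valued variational problem (\ref{minimizers}); what you give up is the $Q_-$ analysis for complex perturbations, which is not needed for Theorem~\ref{theorem-main2} itself but underlies the Lyapunov stability remark that follows it.

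Two caveats. First, what you attribute to Lemma~\ref{lemma-3} is not its statement: that lemma asserts strict local minimality via the second variation, not convexity of $\Lambda_\omega$ --- the convexity observation is your own and should be stated as such. Second, the paper's definition of $X_L$ requires only $u(x)\leq 1$ for $|x|\geq L$, not $-1<u(x)\leq 1$; a competitor $v$ could in principle dip below $-1$ on the tails, in which case the segment from $\tilde u$ to $v$ crosses the singularity of the integrand at $u=-1$ and your pointwise convexity argument fails on that segment. You need either to restrict to competitors with $|u|\leq 1$ on $|x|\geq L$ (which is the natural class here) or to argue separately that a minimizer cannot have $v<-1$ on the tails. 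The truncation strategy you outline for passing from $H^1_L$ test functions to the full direction $v-\tilde u$ is the right kind of argument and matches the role that the restriction $\varphi\in H^1_L$ and the Dirichlet conditions $v(\pm\ell_C)=0$ play in the paper's Lemma~\ref{lemma-3}.
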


\begin{remark}
The result of Theorem \ref{theorem-main2} implies the 
Lyapunov stability of the solitary wave solutions of Theorem
\ref{theorem-main1} under perturbations which do not change the length
of the bell-shaped head. Stability of the solitary waves is confirmed
in the numerical simulations of the time-dependent NLS equation
(\ref{nls}) reported in section~\ref{sec-3}.
\end{remark}

\begin{remark}
The cusped soliton $u_{\rm cusp}$ can be included in the statement of Theorem \ref{theorem-main2} in the formal limit $L \to 0$ 
and $C_{\mu,L} \to -\infty$. It is also a minimizer of mass at fixed $E$ in the class of functions 
\begin{equation}
\label{class-functions-X0}
X_0 := \{ u \in X : \quad u(0) = 1 \;\; {\rm and} \;\; u(x) < 1, \;\; x \neq 0 \}. 
\end{equation}
This implies Lyapunov stability of the cusped soliton under perturbations in $X_0$.
\end{remark}

\begin{remark}
The methods and results obtained in this work are similar to the recent studies of compactons 
in the degenerate NLS equation \cite{GHM} and in the sublinear KdV equation 
\cite{PelinPelin}.
\end{remark}

\section{Proof of Theorem \ref{theorem-main2}}
\label{sec-2}

The following three lemmas address monotonicity of the mappings 
$C \mapsto \ell_C$, $C \mapsto E(u_C)$, and $C \mapsto Q(u_C)$. 
The proofs are based on the following standard
property from vector calculus.
If $W(u,v)$ is a $C^1$ function in an open region of $\mathbb{R}^2$, then the differential of $W$ is defined by 
$$
dW(u,v) = \frac{\partial W}{\partial u} du + \frac{\partial W}{\partial v} dv
$$
and the line integral of $d W(u,v)$ along any $C^1$ contour $\gamma$ connecting two points $(u_0,v_0)$ and $(u_1,v_1)$ does not depend on $\gamma$ and is evaluated as 
$$
\int_{\gamma} d W(u,v) = W(u_1,v_1) - W(u_0,v_0).
$$
A similar study of the monotonicity of the period function in the context of  differential equations on quantum graphs was recently performed in \cite{KMPX,NP}.

\begin{lemma}
	\label{lemma-1}
	Fix $\omega = 1$ and consider the solitary wave solutions of Theorem \ref{theorem-main1} parametrized by $C \in \mathbb{R}$. The mapping $C \mapsto \ell_C$ 
	is $C^1$ and monotonically increasing such that $\ell_C \to 0$ as $C \to -\infty$ and $\ell_C\to \infty$ as $C \to +\infty$.
\end{lemma}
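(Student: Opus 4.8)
The plan is to establish the claimed properties of $C \mapsto \ell_C$ directly from the integral representation \eqref{def-L-C}, using the change of variables and the line-integral identity for exact differentials described above. First I would substitute $u = \sqrt{1 + e^{2C} s^2}$ (or an equivalent rescaling) in \eqref{def-L-C} so that the limits of integration become the fixed interval $s \in [0,1]$, converting $\ell_C$ into an integral whose integrand depends smoothly on the parameter $C$; this makes the $C^1$-dependence of $C \mapsto \ell_C$ immediate by differentiation under the integral sign, provided the resulting integrand and its $C$-derivative are dominated by an integrable function uniformly on compact $C$-intervals. The mild (integrable, square-root-type) singularity of $\left(2C - \log(u^2-1)\right)^{-1/2}$ at the upper endpoint $u = \sqrt{1+e^{2C}}$ is preserved under this substitution as an integrable singularity at $s = 1$, so the differentiation under the integral sign is justified.

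For monotonicity, the cleanest route is to compute $\frac{d\ell_C}{dC}$ after the substitution and show the integrand of the derivative is positive. Equivalently, and in the spirit of the vector-calculus remark preceding the lemma, I would introduce the potential-like function $W(u,C)$ whose differential along the level curve $\tfrac12 (u')^2 = C - \tfrac12 \log(u^2-1)$ reproduces the integrand, and express $\frac{d\ell_C}{dC}$ as a line integral of an exact differential; the sign then follows from monotonicity of $\log(u^2-1)$ in $u$ on $(1,\infty)$. Concretely, writing $p(u,C) = \sqrt{2C - \log(u^2-1)}$, one has $\partial_C \big(p^{-1}\big) = -p^{-3}$, but there is also a boundary contribution from the moving endpoint $u = \sqrt{1+e^{2C}}$ where $p = 0$; the two effects must be combined, and handling this endpoint term is where care is needed. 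The standard device is to differentiate instead the implicit relation \eqref{bell-head-continuous} evaluated at a convenient interior reference value of $u$, or to regularize the upper limit to $\sqrt{1 + e^{2C}} - \delta$, differentiate, and pass $\delta \to 0$; the endpoint contributions then cancel the singular part and leave a manifestly positive integral. This passage to the limit at the singular endpoint is the main obstacle.

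Finally, for the limiting behavior: as $C \to -\infty$, the interval of integration $[1, \sqrt{1+e^{2C}}]$ shrinks to the point $\{1\}$ since $e^{2C} \to 0$, and on this interval $2C - \log(u^2 - 1) = 2C - \log(u^2-1)$ stays bounded below away from $0$ except in a vanishingly small neighborhood of the endpoint; a direct estimate (again after the $s$-substitution, where the integrand becomes $e^{C}$ times a bounded-in-$L^1$ function of $s$) gives $\ell_C = \mathcal{O}(e^C) \to 0$. As $C \to +\infty$, I would bound $\ell_C$ below by restricting the integral to, say, $u \in [1, 2]$ where $2C - \log(u^2-1) \le 2C + \log 3$ is uniformly $\mathcal{O}(C)$, so the integrand is $\gtrsim C^{-1/2}$ on a fixed-length interval — wait, that only gives a lower bound of order $C^{-1/2} \to 0$, so instead I would restrict to a $u$-interval of length comparable to $e^C$ near the upper endpoint, or simply note that near $u = \sqrt{1+e^{2C}}$ the integrable singularity contributes a term growing like $e^{C/2}$ after rescaling; the $s$-substitution makes this transparent since $\ell_C = e^C \int_0^1 \big(2C - \log(e^{2C} s^2)\big)^{-1/2}\, ds = e^C \int_0^1 \big({-}2\log s\big)^{-1/2} ds$, a positive constant times $e^C$, hence $\ell_C \to \infty$. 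With these three ingredients ($C^1$ smoothness, positive derivative, and the two limits), the lemma follows.
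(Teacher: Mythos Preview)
Your substitution $u=\sqrt{1+e^{2C}s^2}$ is a genuinely cleaner route than the paper's. The paper never fixes the limits of integration; instead it regularizes by multiplying through by $C$, uses the exact-differential identity $d\big[A(u)v/A'(u)\big]$ with $A(u)=\log(u^2-1)$ and $v=\sqrt{2C-A(u)}$ to desingularize $C\ell_C$, differentiates in $C$, and then integrates by parts once more to cancel the factor $2C$ and obtain
\[
\frac{d\ell_C}{dC}=\int_1^{\sqrt{1+e^{2C}}}\frac{(1+u^2)\,du}{u^2\sqrt{2C-\log(u^2-1)}}>0.
\]
Your approach reaches the \emph{same} integrand in one step: after the substitution, $2C-\log(u^2-1)=-2\log s$ no longer depends on $C$, so differentiating under the integral sign hits only the Jacobian factor $g(s,C)=e^{2C}s/\sqrt{1+e^{2C}s^2}$, which is visibly increasing in $C$; reverting to the $u$-variable recovers the paper's formula above. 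What you gain is that the moving-endpoint and $p^{-3}$ issues you worry about simply never arise, so the second half of your monotonicity paragraph (regularizing to $\sqrt{1+e^{2C}}-\delta$, cancelling endpoint terms, etc.) is unnecessary.

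One computational slip to fix: in your last paragraph you write $\ell_C=e^C\int_0^1(-2\log s)^{-1/2}\,ds$ as an identity. The Jacobian of your substitution is $du=\dfrac{e^{2C}s}{\sqrt{1+e^{2C}s^2}}\,ds$, not $e^C\,ds$, so the correct expression is
\[
\ell_C=\int_0^1\frac{e^{2C}s}{\sqrt{1+e^{2C}s^2}\,\sqrt{-2\log s}}\,ds.
\]
For $C\to+\infty$ the Jacobian tends to $e^C$ pointwise and one gets $\ell_C\sim e^C\int_0^1(-2\log s)^{-1/2}\,ds$ only as an asymptotic (which is still enough for $\ell_C\to\infty$); for $C\to-\infty$ the same formula gives $\ell_C=\mathcal{O}(e^{2C})$, not $\mathcal{O}(e^C)$, but either way $\ell_C\to0$. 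With these corrections your argument is complete.
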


\begin{proof}
	In order to show that the mapping $C \mapsto \ell_C$ 
	is $C^1$ and to compute $\frac{d\ell_C}{dC}$, we regularize the representation (\ref{def-L-C}) as follows 
\begin{eqnarray*}
	C \ell_C &=& \int_1^{\sqrt{1+e^{2C}}} \frac{C du}{\sqrt{2C - \log(u^2-1)}}\\ 
	&=& \frac{1}{2} \int_1^{\sqrt{1+e^{2C}}} 
	\left[ \sqrt{2C - \log(u^2-1)} + \frac{\log(u^2-1)}{\sqrt{2C- \log(u^2-1)}} \right] du,
\end{eqnarray*}
where the first-order invariant (\ref{ode-energy}) with $\omega = 1$ has been used. Denote $A(u) := \log(u^2-1)$ and write $v^2 + A(u) = 2C$ for the integral curve with the constant level $C$. Since $A'(u) \neq 0$ for $u > 1$, we have 
\begin{eqnarray*}
d \left[ \frac{A(u) v}{A'(u)} \right] &=& \left( 1 - \frac{A(u) A''(u)}{[A'(u)]^2} \right) v du + \frac{A(u)}{A'(u)} dv \\ 
&=& \left( 1 - \frac{A(u) A''(u)}{[A'(u)]^2} \right) v du - \frac{A(u)}{2v} du. 
\end{eqnarray*}
Therefore, the expression for $C \ell_C$ can be written in the non-singular form 
\begin{eqnarray*}
2 C \ell_C &=&  \int_1^{\sqrt{1+e^{2C}}} 
	\left[ 3 - \frac{2 A(u) A''(u)}{[A'(u)]^2} \right] v du \\
	 &=&  \int_1^{\sqrt{1+e^{2C}}} 
	\left[ 3 + \frac{1+u^2}{u^2} \log(u^2-1) \right] v du,
\end{eqnarray*}
where we have used that $v = 0$ at $u = \sqrt{1+e^{2C}}$ and 
$\lim_{u \to 1} (u^2-1) |\log(u^2-1)|^{3/2} = 0$. Since the right-hand side 
is a $C^1$ function of $C$, it follows that the mapping $C \mapsto \ell_C$ 
is $C^1$ so that differentiation in $C$ yields
\begin{eqnarray*}
	2 C \frac{d \ell_C}{dC} =  \int_1^{\sqrt{1+e^{2C}}} 
	\left[ 1 + \frac{1+u^2}{u^2} \log(u^2-1) \right] \frac{du}{v},
\end{eqnarray*}
where we have used that $1 = v \frac{\partial v}{\partial C}$ at fixed $u$. 
Let us now integrate by parts with the use of 
$$
\frac{d}{du} \left[ \frac{u^2-1}{u} \sqrt{2C- \log(u^2-1)} \right] = 
	-\frac{1}{\sqrt{2C- \log(u^2-1)}} + \frac{1+u^2}{u^2} \sqrt{2C- \log(u^2-1)}.
$$
Substituting it to the formula for $2C \frac{d \ell_C}{dC}$ and cancelling $2C$ on both sides of equation yields the final expression 
\begin{eqnarray}
\label{der-length}
\frac{d \ell_C}{dC} =  \int_1^{\sqrt{1+e^{2C}}} 
\frac{(1+u^2) du}{u^2 \sqrt{2C- \log(u^2-1)}},
\end{eqnarray}
which shows that $\frac{d \ell_C}{dC} > 0$. The limit $\ell_C\to 0$ as $C \to -\infty$ follows from the fact that both the integrand and the length of integration in (\ref{def-L-C}) converge to zero as $C \to -\infty$. 
On the other hand, the length of integration diverges as $e^C$ whereas the integrand converges to zero as $C^{-1/2}$ if $C \to \infty$, so that 
$\ell_C\to \infty$ as $C \to \infty$.
\end{proof}

\begin{lemma}
	\label{lemma-1-energy}
	In the setting of Lemma \ref{lemma-1}, the mapping $C \mapsto E(u_C)$ 
	is $C^1$ and monotonically increasing such that $E(u_C) \to E(u_{\rm cusp})$ as $C \to -\infty$ and $E(u_C) \to \infty$ as $C \to +\infty$.
\end{lemma}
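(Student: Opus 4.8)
The plan is to reduce the statement to Lemma \ref{lemma-1} by computing $E(u_C)$ explicitly as a function of $C$. Since $u_C$ is even, $E(u_C) = 2\int_0^\infty (u_C')^2\,dx$; the part of this integral over $|x| > \ell_C$ equals $E(u_{\rm cusp})$ because there $u_C$ is a translate of the cusped soliton, while on the head $[-\ell_C,\ell_C]$ the first-order invariant (\ref{ode-energy}) with $\omega = 1$ gives $(u')^2 = 2C - \log(u^2-1)$. Changing the integration variable from $x$ to $u$, with $u$ decreasing from $\sqrt{1+e^{2C}}$ at $x = 0$ to $1$ at $x = \pm\ell_C$, yields
\begin{equation}
\label{E-of-C}
E(u_C) = E(u_{\rm cusp}) + 2 F(C), \qquad F(C) := \int_1^{\sqrt{1+e^{2C}}} \sqrt{2C - \log(u^2-1)}\;du .
\end{equation}
Near $u = 1$ the integrand of $F(C)$ grows only like $\sqrt{|\log(u-1)|}$ and is thus integrable, and it vanishes at the upper endpoint, so $F(C)$ is finite for every $C \in \mathbb{R}$ and $E(u_C) < \infty$, consistently with $u_C \in H^1(\mathbb{R})$.

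Next I would show that $C \mapsto F(C)$ is $C^1$ with $F'(C) = \ell_C$. Writing $v := \sqrt{2C - \log(u^2-1)}$ as in the proof of Lemma \ref{lemma-1}, one has $\partial_C v = v^{-1}$ at fixed $u$, and the boundary term in Leibniz's rule vanishes because $v = 0$ at $u = \sqrt{1+e^{2C}}$, so formally
\[
F'(C) = \int_1^{\sqrt{1+e^{2C}}} \frac{du}{\sqrt{2C - \log(u^2-1)}} = \ell_C
\]
by (\ref{def-L-C}). Combined with (\ref{E-of-C}) this gives $\frac{d}{dC}E(u_C) = 2\ell_C > 0$, so $C \mapsto E(u_C)$ is monotonically increasing, and since $C \mapsto \ell_C$ is $C^1$ by Lemma \ref{lemma-1}, so is $C \mapsto E(u_C)$. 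To justify differentiating under the integral sign despite the integrable singularity of $v$ at $u = 1$ and the square-root vanishing of $v$ at the upper limit, I would substitute $u^2 - 1 = e^{2C} s$, $s \in (0,1)$, which recasts $F$ as the parametric integral
\begin{equation}
\label{F-param}
F(C) = \frac{e^{2C}}{2} \int_0^1 \frac{\sqrt{-\log s}}{\sqrt{1 + e^{2C} s}}\;ds ;
\end{equation}
its integrand is smooth in $C$ and, on any compact interval $C \in [C_-,C_+]$, is dominated by the $s$-integrable function $\frac{1}{2} e^{2 C_+} \sqrt{-\log s}$, so dominated convergence applies. Alternatively, one can regularize $F(C)$ into a manifestly non-singular form exactly as $C\ell_C$ was regularized in the proof of Lemma \ref{lemma-1}.

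Finally, the limits. As $C \to +\infty$, integrating $\frac{d}{dC}E(u_C) = 2\ell_C$ from a fixed $C_0$ and invoking $\ell_C \to \infty$ from Lemma \ref{lemma-1} gives $E(u_C) \to \infty$. As $C \to -\infty$, the bound $\sqrt{1 + e^{2C} s} \geq 1$ applied to (\ref{F-param}) yields $0 \leq F(C) \leq \frac{1}{2} e^{2C} \int_0^1 \sqrt{-\log s}\,ds \to 0$, hence $E(u_C) \to E(u_{\rm cusp})$ by (\ref{E-of-C}). The main obstacle is the $C^1$ regularity of $F$ at the two endpoint singularities; I expect representation (\ref{F-param}) --- or the regularization trick from the proof of Lemma \ref{lemma-1} --- to dispose of it cleanly, after which the whole statement follows from Lemma \ref{lemma-1}.
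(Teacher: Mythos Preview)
Your proof is correct and follows essentially the same route as the paper: the same decomposition $E(u_C) = E(u_{\rm cusp}) + 2F(C)$, the same differentiation $F'(C) = \ell_C = \tfrac12\frac{d}{dC}E(u_C)$, and the same qualitative asymptotics for the two limits. Your substitution $u^2-1 = e^{2C}s$ to justify the $C^1$ regularity is in fact more careful than the paper (which simply asserts it); the only slip is that differentiation under the integral sign requires dominating the $C$-derivative of the integrand in (\ref{F-param}) rather than the integrand itself, but that bound is equally immediate from the same representation.
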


\begin{proof}
	It follows from (\ref{cont-family}) that 
	$$
	E(u_C) = E(u_{\rm cusp}) + 2 \int_1^{\sqrt{1+e^{2C}}} \sqrt{2C- \log(u^2-1)} du,
	$$
	where the right-hand side is $C^1$ in $C$. Differentiating in $C$ yields
\begin{equation}
\label{der-energy}
	\frac{dE(u_C)}{dC} = 2 \int_1^{\sqrt{1+e^{2C}}} \frac{du}{\sqrt{2C- \log(u^2-1)}} = 2 \ell_C,
\end{equation}
	which shows that $\frac{d E(u_C)}{d C} > 0$. The length of integration 
	in the second integral for $E(u_C)$ converges to $0$ as $e^{2C}$ as $C \to -\infty$ whereas the integrand grows like $|C|^{1/2}$ as $C \to -\infty$. 
	Hence the second integral converges to $0$ and $E(u_C) \to E(u_{\rm cusp})$ as $C \to -\infty$. On the other hand, both the length of integration 
	and the integrand grow as $C \to +\infty$ so that $E(u_C) \to \infty$ as $C \to \infty$.
\end{proof}

\begin{lemma}
	\label{lemma-1-mass}
	In the setting of Lemma \ref{lemma-1}, the mapping $C \mapsto Q(u_C)$ 
	is $C^1$ and there exist $C_1 \leq C_2 < 0$ such that 
\begin{equation}
\label{mass-inequality}
	\frac{d}{dC} Q(u_C) > 0, \quad C \in (-\infty,C_1) \quad \mbox{\rm and} \quad
	\frac{d}{dC} Q(u_C) < 0, \quad C \in (C_2,\infty).
\end{equation}
\end{lemma}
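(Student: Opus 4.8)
The plan is to follow the scheme of the proofs of Lemmas~\ref{lemma-1} and \ref{lemma-1-energy}: produce a non-singular integral representation of $Q(u_C)$, read off that $C\mapsto Q(u_C)$ is $C^1$, differentiate, and reduce the answer to the already-understood quantities $\ell_C$ and $\frac{d\ell_C}{dC}$. Using the decomposition (\ref{cont-family}) together with the fact that $u_C>1$ on the head $(-\ell_C,\ell_C)$ and $u_C\le 1$ on the tails (so that $|1-u_C^2|=u_C^2-1$ on the head), the tail contributions assemble into $Q(u_{\rm cusp})$, and the change of variables $x\mapsto u$ via (\ref{bell-head-continuous}) gives
$$
Q(u_C) = Q(u_{\rm cusp}) - 2\int_1^{\sqrt{1+e^{2C}}} \frac{\log(u^2-1)}{\sqrt{2C-\log(u^2-1)}}\,du .
$$
The integrand is singular at $u=1$, so I would regularize it exactly as in Lemma~\ref{lemma-1}: with $A(u)=\log(u^2-1)$ and $v^2+A(u)=2C$ along the level curve, integrating the identity for $d\!\left[A(u)v/A'(u)\right]$ from the proof of Lemma~\ref{lemma-1} over $u\in[1,\sqrt{1+e^{2C}}]$ (the boundary terms vanish at $u=\sqrt{1+e^{2C}}$, where $v=0$, and at $u=1$ since $\lim_{u\to 1}(u^2-1)|\log(u^2-1)|^{3/2}=0$) converts the representation into
$$
Q(u_C) = Q(u_{\rm cusp}) - 2\int_1^{\sqrt{1+e^{2C}}} \left[2+\frac{1+u^2}{u^2}\log(u^2-1)\right]\sqrt{2C-\log(u^2-1)}\,du .
$$
The right-hand side is $C^1$ in $C$ (the integrand is integrable near $u=1$, smooth away from it, and vanishes at the upper endpoint, so the endpoint term in the $C$-derivative is zero), which yields the $C^1$ assertion and, after differentiation,
$$
\frac{d}{dC}Q(u_C) = -2\int_1^{\sqrt{1+e^{2C}}} \left[2+\frac{1+u^2}{u^2}\log(u^2-1)\right] \frac{du}{\sqrt{2C-\log(u^2-1)}} .
$$

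Next I would substitute $\log(u^2-1)=2C-v^2$ with $v=\sqrt{2C-\log(u^2-1)}$, which turns the integrand into $\frac{2}{v}+\frac{2C(1+u^2)}{u^2 v}-\frac{(1+u^2)v}{u^2}$. By (\ref{def-L-C}) and (\ref{der-length}) the first two terms integrate to $2\ell_C$ and $2C\frac{d\ell_C}{dC}$, while the integration-by-parts identity for $\frac{d}{du}\!\left[\frac{u^2-1}{u}v\right]$ from the proof of Lemma~\ref{lemma-1} gives $\int_1^{\sqrt{1+e^{2C}}}\frac{(1+u^2)v}{u^2}\,du=\ell_C$ (again the boundary terms vanish). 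Combining these,
$$
\frac{d}{dC}Q(u_C) = -4\ell_C - 4C\frac{d\ell_C}{dC} + 2\ell_C = -2\ell_C - 4C\frac{d\ell_C}{dC} .
$$

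With this closed-form expression the sign analysis is elementary. From (\ref{def-L-C}) and (\ref{der-length}), $\ell_C=\int_1^{\sqrt{1+e^{2C}}}v^{-1}\,du$ and $\frac{d\ell_C}{dC}=\int_1^{\sqrt{1+e^{2C}}}\frac{1+u^2}{u^2}v^{-1}\,du$, and since $1<\frac{1+u^2}{u^2}<2$ for $u>1$ we obtain $\ell_C<\frac{d\ell_C}{dC}<2\ell_C$ (recall also $\ell_C>0$ and $\frac{d\ell_C}{dC}>0$ by Lemma~\ref{lemma-1}). Hence for $C\le -\tfrac12$ we have $-4C\frac{d\ell_C}{dC}\ge 2\frac{d\ell_C}{dC}>2\ell_C$, so $\frac{d}{dC}Q(u_C)>0$; and for $C\ge -\tfrac14$ we have $-4C\frac{d\ell_C}{dC}\le \frac{d\ell_C}{dC}<2\ell_C$, so $\frac{d}{dC}Q(u_C)<0$. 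This proves (\ref{mass-inequality}) with $C_1=-\tfrac12$ and $C_2=-\tfrac14$, in particular $C_1\le C_2<0$.

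I expect the main obstacle to be the bookkeeping behind the identity $\frac{d}{dC}Q(u_C)=-2\ell_C-4C\frac{d\ell_C}{dC}$: one must carry out the regularization so that all singular boundary contributions genuinely cancel (making the claimed $C^1$ regularity legitimate) and then correctly recognize the three integrals produced by the substitution $\log(u^2-1)=2C-v^2$ as $\ell_C$, $\frac{d\ell_C}{dC}$, and $\ell_C$; after that the monotonicity statement (\ref{mass-inequality}) is just the bound $\ell_C<\frac{d\ell_C}{dC}<2\ell_C$. As a consistency check, the asymptotics $\ell_C\sim\frac{\sqrt{\pi}}{2}e^{2C}$ as $C\to-\infty$ and $\ell_C\sim\sqrt{\pi/2}\,e^{C}$ as $C\to+\infty$ (compatible with Lemma~\ref{lemma-1}) give $\frac{d}{dC}Q(u_C)\sim-4\sqrt{\pi}\,(C+\tfrac14)\,e^{2C}$ and $\sim-\sqrt{2\pi}\,(2C+1)\,e^{C}$, respectively, which also displays the non-monotonicity of $C\mapsto Q(u_C)$.
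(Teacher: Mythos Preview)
Your proof is correct and reaches the same key identity as the paper, $\frac{d}{dC}Q(u_C)=-2\ell_C-4C\frac{d\ell_C}{dC}$, but by a more laborious route. The paper bypasses the regularization step entirely: writing $\log(u^2-1)=2C-v^2$ directly in the integral for $Q(u_C)-Q(u_{\rm cusp})$ gives
\[
Q(u_C)=Q(u_{\rm cusp})-4C\ell_C+E(u_C)-E(u_{\rm cusp}),
\]
after which $C^1$ regularity is inherited from Lemmas~\ref{lemma-1} and~\ref{lemma-1-energy}, and differentiation using $\frac{dE(u_C)}{dC}=2\ell_C$ yields the identity in one line. Your detour through the $d[A(u)v/A'(u)]$ regularization and the subsequent re-splitting of the differentiated integrand works, but it is doing more than is needed.

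On the other hand, your sign analysis is sharper than the paper's. The paper argues qualitatively (positivity of $\frac{d\ell_C}{dC}$ and $\ell_C\to 0$ as $C\to-\infty$) to conclude that the derivative is eventually positive for large negative $C$; your observation that $1<\frac{1+u^2}{u^2}<2$ for $u>1$ gives the two-sided bound $\ell_C<\frac{d\ell_C}{dC}<2\ell_C$, from which the explicit choices $C_1=-\tfrac12$ and $C_2=-\tfrac14$ drop out. This is a genuine improvement over the paper's argument, which as written leaves to the reader why the $-4C\frac{d\ell_C}{dC}$ term dominates $-2\ell_C$ near $-\infty$.
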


\begin{proof}
	It follows from (\ref{cont-family}) that 
	$$
	Q(u_C) = Q(u_{\rm cusp}) - 2 \int_1^{\sqrt{1+e^{2C}}} \frac{\log(u^2-1)}{\sqrt{2C- \log(u^2-1)}} du,
	$$
	By using (\ref{ode-energy}), this expression can be rewritten as
	$$
	Q(u_C) = Q(u_{\rm cusp}) - 4 C \ell_C + E(u_C) - E(u_{\rm cusp}),
	$$
	where the right-hand side is $C^1$ in $C$ due to Lemmas \ref{lemma-1} and \ref{lemma-1-energy}. Differentiating in $C$ and using (\ref{der-energy}) yield
\begin{equation}
\label{der-mass}
	\frac{dQ(u_C)}{dC} =  - 2 \ell_C - 4 C \frac{d \ell_C}{dC}.
\end{equation}
It follows from positivity of (\ref{der-length}) that $\frac{d Q(u_C)}{d C} < 0$ for $C \geq 0$. On the other hand, since $\ell_C\to 0$ as $C \to -\infty$, 
positivity of (\ref{der-length}) implies that $\frac{d Q(u_C)}{d C} > 0$ for 
sufficiently large negative $C$. By continuity of $\frac{dQ(u_C)}{dC}$, there exist $C_1 \leq C_2 < 0$ such that the signs in (\ref{mass-inequality}) hold.
\end{proof}

\begin{remark}
Figure~\ref{fig-EandQ} shows all three mappings as functions of $C$. 
It suggests that the mapping $C \mapsto Q(u_C)$ has exactly one critical point, 
that is, $C_1 = C_2$ in the statement of Lemma \ref{lemma-1-mass}. We
were not able to prove this property from the analysis of (\ref{der-mass}), 
unlike the direct proof of Lemma 10 in \cite{NP}.
\end{remark}

\begin{figure}[hbt]
	\includegraphics[scale=0.35]{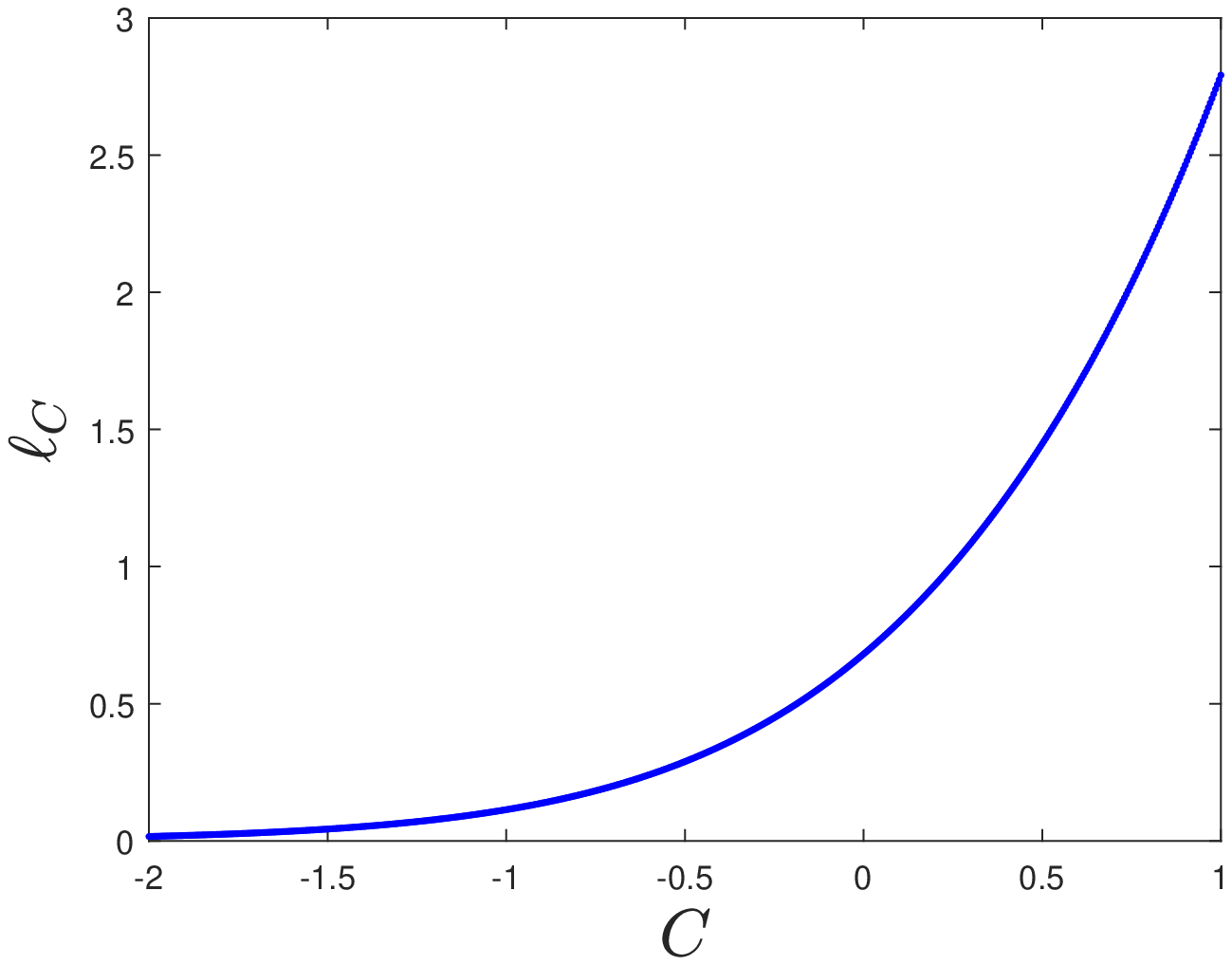}
	\includegraphics[scale=0.35]{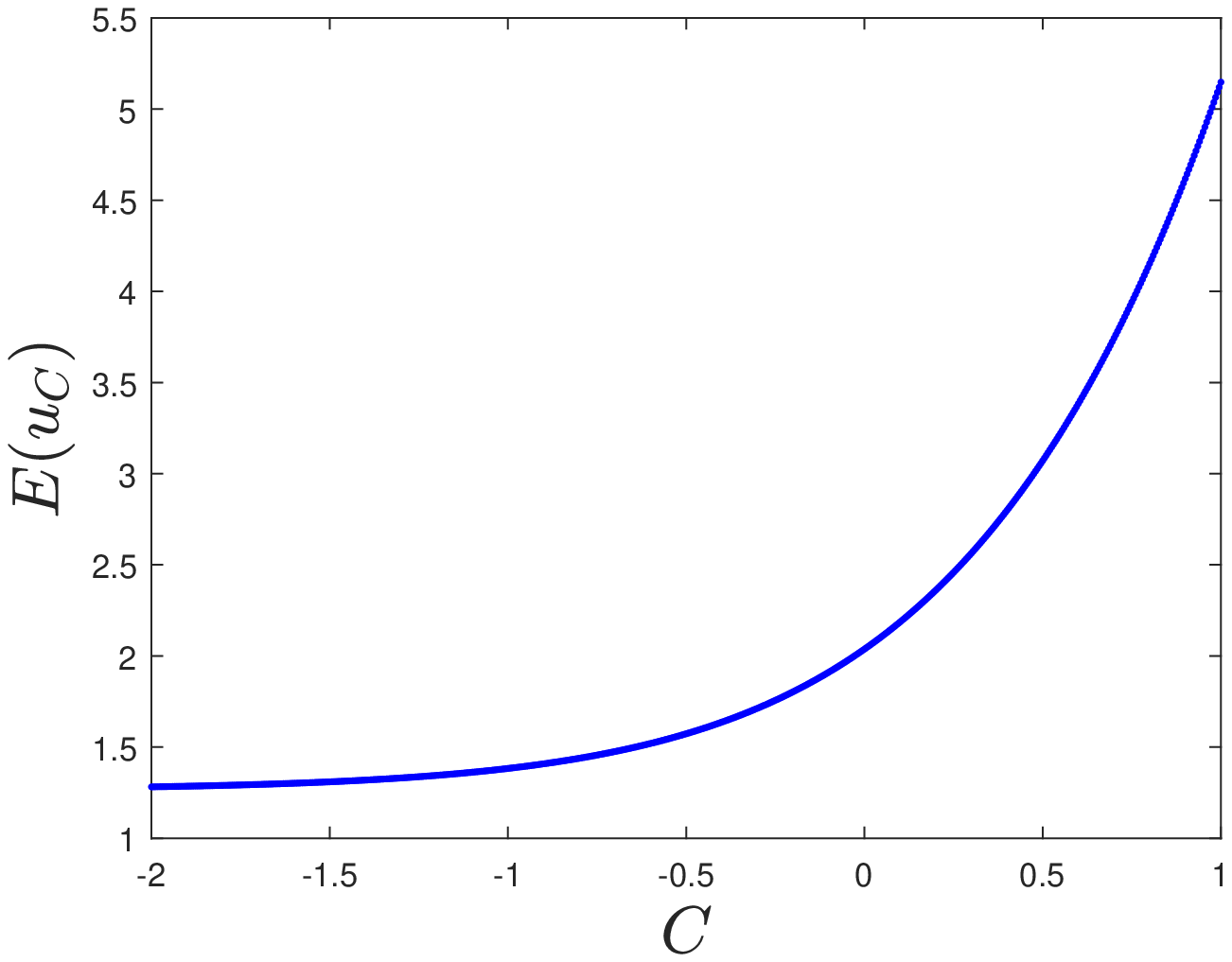}
	\includegraphics[scale=0.35]{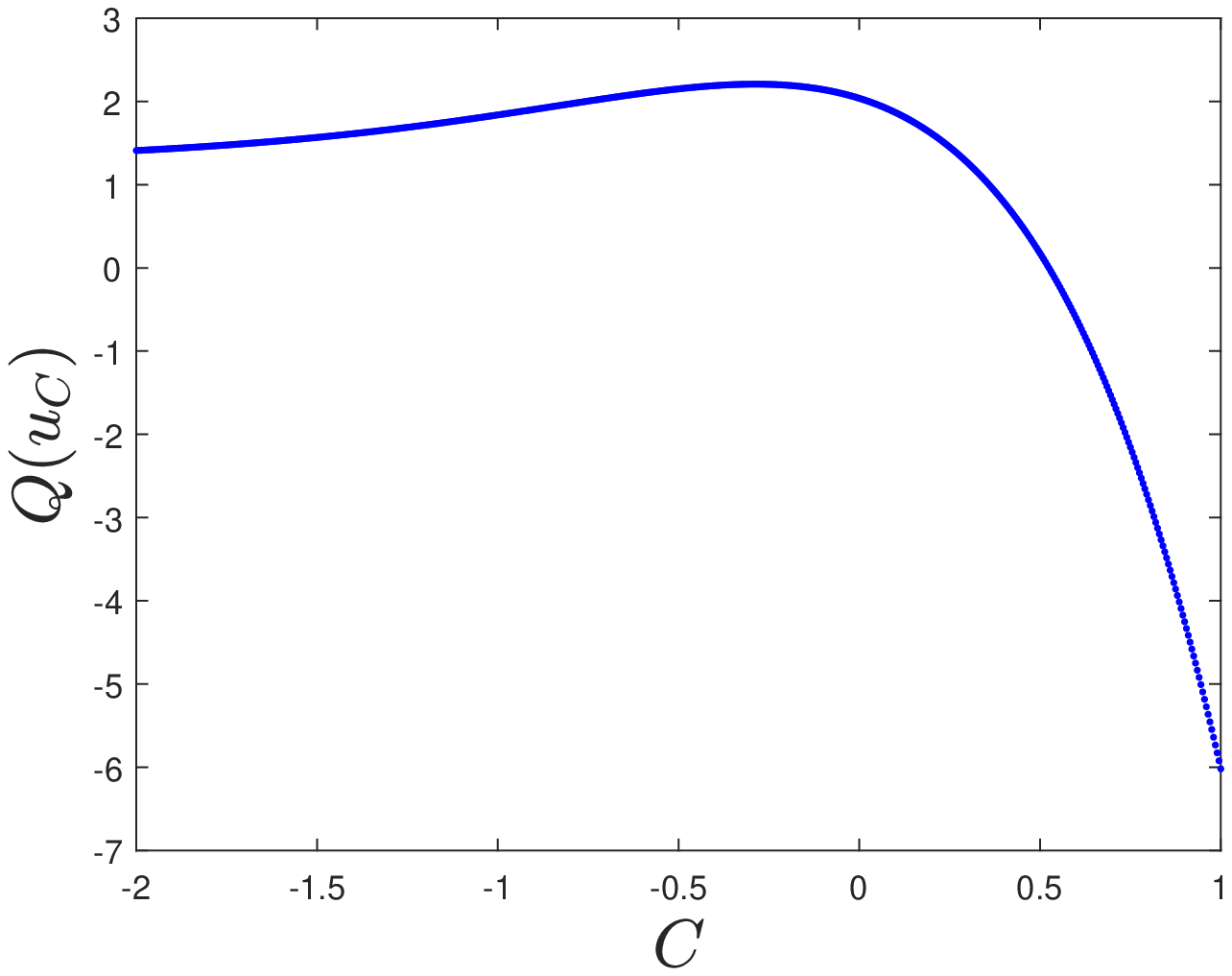}
	\caption{Dependencies of $\ell_C$ (left), $E(u_C)$ (middle), and $Q(u_C)$ (right) versus $C$. }
	\label{fig-EandQ}
\end{figure}

The following lemma uses the scaling transformation to obtain a critical point of the constrained variational problem (\ref{minimizers}).

\begin{lemma}
	\label{lemma-2}
For every $\mu > 0$ and $L > 0$, there exists a unique value of $C = C_{\mu,L}$ such that a critical point of the constrained variational problem (\ref{minimizers}) is defined by the solution $u_C$ of Theorem \ref{theorem-main1}.
\end{lemma}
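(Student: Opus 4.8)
The plan is to use a two–parameter scaling transformation to reduce the general $\omega=1$-free problem to the normalized family $u_C$ of Theorem \ref{theorem-main1}, and then to match the two constraints (the value of $E$ and the length $L$ of the bell-shaped head) to determine $\omega$ and $C$ uniquely. Concretely, if $u_C$ solves $(1-u^2)u''=u$ with head on $[-\ell_C,\ell_C]$, I would look for a rescaled profile of the form $u(x)=u_C(x/a)$ for some $a>0$; plugging into the stationary equation $(1-u^2)u''=\omega u$ shows that $u(x)=u_C(x/a)$ solves the $\omega$-equation with $\omega = a^{-2}$, so the scaling is one-parameter after all (amplitude cannot be rescaled because the nonlinearity $1-u^2$ is not homogeneous — this is the key structural point that forces us to fix $\omega$ as a free parameter rather than also rescaling). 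The head of $u_C(x/a)$ then lives on $[-a\ell_C, a\ell_C]$, and a direct change of variables gives $E(u_C(\cdot/a)) = a\, E(u_C)$ (since $E$ scales like $\int |u'|^2 dx$ with one derivative and one length factor).

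With this scaling in hand, the two equations to solve are $a\ell_C = L$ and $a\, E(u_C) = \mu$. Dividing gives $E(u_C)/\ell_C = \mu/L$, a single equation for $C$; and once $C$ is found, $a = L/\ell_C$ is determined, hence $\omega = \ell_C^2/L^2$. So the main step is to show that the map $C \mapsto E(u_C)/\ell_C$ is a bijection from $\mathbb{R}$ onto $(0,\infty)$, or at least onto a range containing every $\mu/L$. Here I would invoke Lemmas \ref{lemma-1} and \ref{lemma-1-energy}: both $\ell_C$ and $E(u_C)$ are $C^1$, $\ell_C$ is increasing with $\ell_C \to 0$ as $C\to-\infty$ and $\ell_C\to\infty$ as $C\to+\infty$, while $E(u_C)$ is increasing with $E(u_C)\to E(u_{\rm cusp})>0$ as $C\to-\infty$ and $E(u_C)\to\infty$ as $C\to+\infty$. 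Hence as $C\to-\infty$ the ratio $E(u_C)/\ell_C \to +\infty$ (finite positive numerator over a denominator going to zero), and I will need to check the behavior as $C\to+\infty$: using the asymptotics from the proofs of those lemmas ($\ell_C$ and $E(u_C)$ both grow, with the integrands controlled by $|C|^{-1/2}$ and $|C|^{1/2}$ respectively over a length $\sim e^C$), the ratio should tend to $0$. Combined with continuity and the intermediate value theorem this gives existence of $C_{\mu,L}$.

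For uniqueness I would want strict monotonicity of $C \mapsto E(u_C)/\ell_C$. Differentiating and using the explicit formulas (\ref{der-length}) and (\ref{der-energy}) — namely $\frac{dE(u_C)}{dC} = 2\ell_C$ — the sign of $\frac{d}{dC}\big(E(u_C)/\ell_C\big)$ is that of $2\ell_C^2 - E(u_C)\,\ell_C'$, which is not obviously signed and is the one place the argument could require more care. An alternative, cleaner route to uniqueness that avoids this is to argue that a critical point of (\ref{minimizers}) must, after the inverse scaling normalizing $\omega$ to $1$, be a weak solution of (\ref{ode}) with the prescribed head length, and then appeal to the fact (from \cite{RKP}, Theorem \ref{theorem-main1}) that for $\omega=1$ the weak single-humped solution with head on a given interval $[-\ell,\ell]$ is unique — i.e. the map $C\mapsto \ell_C$ being a bijection (Lemma \ref{lemma-1}) already pins down $C$ from the length, and then the energy constraint pins down the scaling. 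That is the form of the argument I expect the authors take.

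The main obstacle, then, is not any of these steps individually but organizing the bookkeeping of the two constraints against a one-parameter scaling: one must be careful that the system $a\ell_C=L$, $aE(u_C)=\mu$ is genuinely solvable for the pair $(a,C)$ and that the solution is unique, which comes down precisely to the monotonicity/range properties of $\ell_C$ and $E(u_C)$ supplied by Lemmas \ref{lemma-1} and \ref{lemma-1-energy}. I would also verify that the rescaled $u_C(\cdot/a)$ lies in the right space $X_L$ with $L = a\ell_C$ and satisfies the singular boundary behavior required in Definition \ref{def-weak-solution-new} (this follows from the asymptotics (\ref{asympt-sol-bell-cont}) under the scaling, with the $\sqrt{|\log|\cdot||}$ factor reproduced up to the multiplicative constant absorbed into the normalization), so that the critical point produced is indeed admissible. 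The genuine minimization — showing this critical point is the global infimum — is then deferred to the convexity argument of Lemma \ref{lemma-3} in the proof of Theorem \ref{theorem-main2}, and is not part of this lemma.
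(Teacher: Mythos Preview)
Your overall strategy --- rescale the normalized family $u_C$ by a single dilation parameter, then match the two constraints $E=\mu$ and head length $=L$ to solve for the pair $(a,C)$ --- is exactly the paper's approach. But you have a sign error in the energy scaling that derails the uniqueness argument. If $v(x)=u_C(x/a)$ then $v'(x)=a^{-1}u_C'(x/a)$, so
\[
E(v)=\int_{\mathbb R}\frac{1}{a^{2}}\,|u_C'(x/a)|^{2}\,dx=\frac{1}{a}\,E(u_C),
\]
not $a\,E(u_C)$. Equivalently, in the paper's notation $a=\omega^{-1/2}$ and $E(u_\omega)=\sqrt{\omega}\,E(u_C)$. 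The correct system is therefore $a\ell_C=L$ and $a^{-1}E(u_C)=\mu$, and \emph{multiplying} (not dividing) eliminates $a$ to give the single equation
\[
\ell_C\,E(u_C)=L\mu.
\]
Now the monotonicity is immediate: by Lemmas \ref{lemma-1} and \ref{lemma-1-energy} both factors are positive, $C^1$, and strictly increasing, so $C\mapsto \ell_C E(u_C)$ is strictly increasing; and the limits $\ell_C\to 0$, $E(u_C)\to E(u_{\rm cusp})$ as $C\to-\infty$ and $\ell_C,E(u_C)\to\infty$ as $C\to+\infty$ give range $(0,\infty)$. This yields existence and uniqueness of $C_{\mu,L}$ in one stroke, and then $a=L/\ell_C$, $\omega=\ell_C^2/L^2$. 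The paper packages exactly this computation as positivity of the Jacobian of $(\omega,C)\mapsto(\mu,L)$, which equals $\frac{1}{2\omega}\bigl[E(u_C)\ell_C'+\ell_C E'(u_C)\bigr]=\frac{1}{2\omega}\frac{d}{dC}\bigl(\ell_C E(u_C)\bigr)$.

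Your difficulty with the sign of $2\ell_C^2-E(u_C)\ell_C'$ is thus an artifact of the scaling error: you were analyzing the ratio $E(u_C)/\ell_C$, whose monotonicity is genuinely unclear, instead of the product, whose monotonicity is trivial. The alternative uniqueness route you sketch (normalize $\omega$, then use bijectivity of $C\mapsto\ell_C$) does not work as stated either, because after inverse scaling the head length becomes $L/a$, which still depends on the undetermined $a$; one cannot decouple the two constraints that way.
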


\begin{proof}
	Let $u_C$ be the solitary wave solution of the normalized equation
	\begin{align}
	\label{2nd}
	 u'' = \frac{u}{1-u^2}.
	\end{align} 
	The scaled function $u_{\omega}(x) = u_C(\sqrt{\omega}x)$ is a solution of the second-order equation (\ref{ode}) for $\omega > 0$ and is the critical point 
	of the action functional $\Lambda_{\omega}(u)$ given by (\ref{action}) in $X_L$. 	Using the scaling transformation in the conserved mass and energy in (\ref{mass-energy}) gives
	\begin{equation*}
	Q(u_{\omega}) = \frac{1}{\sqrt{\omega}} Q(u_C), \quad E(u_{\omega}) = \sqrt{\omega} E(u_C).
	\end{equation*}
The singularities of $u_{\omega}$ are located at 
	\begin{equation*}
L = \frac{1}{\sqrt{\omega}} \ell_C.
\end{equation*}
	The Lagrange multiplier $\omega$ is selected from the condition 
	$\mu = E(u_{\omega}) = \sqrt{\omega} E(u_C)$. 
	Computing the Jacobian of the transformation 
	$(\omega,C) \mapsto (\mu,L)$ by 
	\begin{equation}
	\left| \begin{array}{cc} \displaystyle  \frac{\partial\mu}{\partial \omega} & \displaystyle\frac{\partial\mu}{\partial C} \\ \\
	\displaystyle \frac{\partial L}{\partial \omega} & \displaystyle \frac{\partial L}{\partial C} \end{array} \right| = 
	\frac{1}{2 \omega} \left[ E(u_C) \frac{d\ell_C}{d C} + \ell_C \frac{dE(u_C)}{d C} \right], 
	\end{equation}
	it follows by Lemmas \ref{lemma-1} and \ref{lemma-1-energy} that the Jacobian is positive for every $C \in \mathbb{R}$. Hence
	the mapping $(\omega,C) \mapsto (\mu,L)$ is invertible and there exists a unique $C = C_{\mu,L}$ for every $\mu > 0$ and $L > 0$. 
\end{proof}

\begin{remark}
	\label{remark-1}
	If $L = 0$, then $\ell_C = 0$ in the formal limit $C\to -\infty$. 
	This yields the cusped soliton $u_{\rm cusp}$ with $\sqrt{\omega} = \frac{\mu}{E(u_{\rm cusp})}$ for every $\mu > 0$. Hence the limiting value $L = 0$ can be included in the statement of Lemma \ref{lemma-2} with 
	$\lim_{L\to 0} C_{\mu,L} = -\infty$ for fixed $\mu > 0$. 
\end{remark}

\begin{remark}
		\label{remark-2}
	If $\mu = 0$, then $E(u_{\omega}) = 0$ with the only solution 
	$u_{\omega}(x)$ being a constant. If the constant is nonzero, then $u_{\omega} \notin  H^1(\mathbb{R})$. If the constant is zero, then 
	$L$ is not defined. In either case, the limiting value $\mu = 0$ cannot be included 
	in the statement of Lemma \ref{lemma-2}. 
\end{remark}

\begin{remark}
	The inverse transformation of the mapping $(\omega,C) \mapsto (\mu,L)$ in the proof of Lemma \ref{lemma-2} can be made explicit. Since $\sqrt{\omega} = \frac{\ell_C}{L}$ and $\mu = \sqrt{\omega} E(u_C)$, $C$ is uniquely found from the equation 	$\ell_C E(u_C) = L \mu$. Hence, $C_{\mu,L} \equiv C_{\mu L}$ depends on one parameter $\mu L$. This dependence is shown in Figure \ref{fig-C_muL}.
\end{remark}

\begin{figure}[hbt]
\includegraphics[scale=0.5]{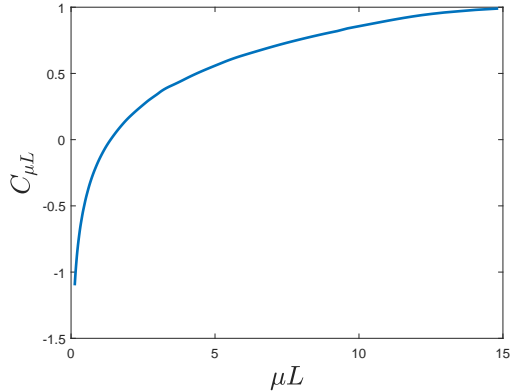}
\caption{Dependence of $C_{\mu L}$ from Lemma \ref{lemma-2} on the parameter $\mu L$.}
\label{fig-C_muL}
\end{figure}

The following lemma states that the critical point of Lemma \ref{lemma-2} is in fact a strict local minimizer of the constrained variational problem (\ref{minimizers}).

\begin{lemma}
	\label{lemma-3}
	Fix $\omega = 1$ and $C \in \mathbb{R}$. The solution $u_C$ of Theorem \ref{theorem-main1} is a strict local minimizer of the action functional 
	$\Lambda_{\omega = 1}(u)$ in $X_{L = \ell_C}$.
\end{lemma}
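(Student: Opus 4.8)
The plan is to use the pointwise strict convexity of the integrand of $Q$ together with the weak equation (\ref{ode-weak-new}), so that the first variation of $\Lambda_{1}$ at $u_C$ cancels and the remaining contribution has a definite sign; equivalently, this amounts to the positive definiteness of the second variation of $\Lambda_{1}$ at $u_C$, the point being that global convexity lets one bypass any estimate of the cubic remainder. The starting observation is that $G(s) := -\log|1-s^2|$ is smooth and strictly convex on each of $(-1,1)$ and $(1,\infty)$, with $G'(s) = 2s(1-s^2)^{-1}$ and $G''(s) = 2(1+s^2)(1-s^2)^{-2} > 0$; hence the Bregman inequality $G(s) - G(s_0) - G'(s_0)(s-s_0) \geq 0$ holds whenever $s$ and $s_0$ lie in the same one of these two intervals, with equality only if $s = s_0$.

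Fix $\omega = 1$, $C \in \R$, set $L = \ell_C$, and take $u_L = u_C$ (admissible in Definition \ref{def-weak-solution-new} because the normalization there holds by the asymptotics (\ref{asympt-sol-bell-cont})). The first step is to produce $\delta > 0$ so that every $\varphi \in H^1_L$ with $\|\varphi'\|_{L^2} + \|(1-u_L^2)^{-1}\varphi\|_{L^2\cap L^\infty} < \delta$ satisfies $u_C + \varphi \in X_L$. Away from $x = \pm L$ this is immediate, since $u_C$ is there bounded away from $1$ and the $H^1_L$-norm controls $\|\varphi\|_{L^\infty}$; near $x = \pm L$ it follows from the pointwise bound $|\varphi(x)| \leq M|1-u_C^2(x)|$ with $M = \|(1-u_L^2)^{-1}\varphi\|_{L^\infty} < \delta$, which for $\delta$ small keeps $u_C+\varphi > 1$ on $(-L,L)$ and $|u_C+\varphi| < 1$ on $\{|x| > L\}$, and also yields $\tfrac12|1-u_C^2| \leq |1-(u_C+\varphi)^2| \leq \tfrac32|1-u_C^2|$, whence $u_C + \varphi \in X$.

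The second step integrates the Bregman inequality. Since on $(-L,L)$ both $u_C$ and $u_C+\varphi$ lie in $(1,\infty)$ and on $\{|x|>L\}$ both lie in $(-1,1)$,
\begin{equation*}
Q(u_C + \varphi) - Q(u_C) \geq \int_{\R} G'(u_C)\,\varphi\,dx = 2\langle (1-u_C^2)^{-1} u_C,\varphi\rangle,
\end{equation*}
with strict inequality once $\varphi \not\equiv 0$, the integral being absolutely convergent because $G'(u_C)\varphi = 2u_C\,[(1-u_C^2)^{-1}\varphi]$ with $u_C \in L^2\cap L^\infty$ and $(1-u_C^2)^{-1}\varphi \in L^2\cap L^\infty$. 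Adding the exact identity $E(u_C+\varphi) - E(u_C) = 2\langle u_C',\varphi'\rangle + \|\varphi'\|_{L^2}^2$ and invoking (\ref{ode-weak-new}) with $\omega = 1$, i.e. $\langle u_C',\varphi'\rangle + \langle (1-u_C^2)^{-1}u_C,\varphi\rangle = 0$, the first-order terms cancel and
\begin{equation*}
\Lambda_{1}(u_C + \varphi) - \Lambda_{1}(u_C) > \|\varphi'\|_{L^2}^2 \geq 0 \qquad \text{for every admissible } \varphi \not\equiv 0,
\end{equation*}
which is exactly the assertion that $u_C$ is a strict local minimizer of $\Lambda_{1}$ in $X_{\ell_C}$.

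The main obstacle is concentrated at the weak singularities $x = \pm L$: one must guarantee that the admissible perturbations do not push $u_C + \varphi$ across the singular set $\{|s| = 1\}$ (so the one-sided convexity of $G$ genuinely applies), that $u_C + \varphi$ remains in the energy space $X$, and that $\int_\R G'(u_C)\varphi\,dx$ is absolutely convergent rather than a conditionally defined principal value. All three are precisely what the two defining requirements of $H^1_L$ — the pointwise control $(1-u_L^2)^{-1}\varphi \in L^\infty$ and the integrability $(1-u_L^2)^{-1}\varphi \in L^2$ — are designed to deliver, and once they are used carefully the remaining estimates are routine. (A minor preliminary is to check that $u_C$ with $L = \ell_C$ satisfies (\ref{ode-weak-new}) at all; this follows from Theorem \ref{theorem-main1} together with an integration by parts justified by (\ref{asympt-sol-bell-cont}).)
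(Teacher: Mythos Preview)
Your argument is correct for the lemma as literally stated (the elements of $X_L$, constrained by the pointwise inequalities $u(x)>1$ and $u(x)\le1$, are real-valued), and it takes a genuinely different route from the paper's. The paper expands $\Lambda_1(u_C+v+iw)$ to second order, isolates the quadratic forms $Q_+(v)$ and $Q_-(w)$, bounds a cubic remainder, and then proves coercivity of $Q_\pm$ --- trivially for $Q_+$ (the potential $(1+u_C^2)(1-u_C^2)^{-2}$ is positive), and via Sturm theory plus the Dirichlet conditions at $\pm\ell_C$ for $Q_-$. Your strict-convexity/Bregman approach bypasses both the remainder estimate and any coercivity analysis: once the first variation is cancelled by (\ref{ode-weak-new}), the minimizer inequality drops out directly. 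For real perturbations this is cleaner and more robust, and your checks that $u_C+\varphi$ stays on the correct side of $1$ and that $G'(u_C)\varphi$ is absolutely integrable are exactly the right points to verify.

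The trade-off is scope. The paper's proof in fact treats \emph{complex} perturbations $v+iw$, which goes beyond the lemma's literal statement but is what the Lyapunov stability of the NLS solitary wave actually requires. Your convexity argument does not extend to the imaginary direction: on the head $(-\ell_C,\ell_C)$, where $u_C>1$, the map $w\mapsto -\log\bigl|1-u_C^2-w^2\bigr|$ has second derivative $2(1-u_C^2)^{-1}<0$ at $w=0$, so the integrand of $Q$ is \emph{concave} in $w$ there. This sign-indefiniteness is precisely why the paper must invoke the spectral analysis of $L_-=-\partial_x^2+(1-u_C^2)^{-1}$ with Dirichlet conditions. So your method buys elegance and avoids remainder estimates on the real part, at the price of leaving the $Q_-$ half --- the genuinely nontrivial piece of the paper's proof --- untouched.
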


\begin{proof}
Let $u_C \in X_{\ell_C} \subset H^1(\mathbb{R})$ be a solitary wave solution of the normalized equation (\ref{2nd}). Let $v + i w$  with real $v,w \in H^1_{\ell_C} \subset H^1(\mathbb{R})$ be a perturbation to $u_C$. Since $(1-u_C^2)^{-1} v, 
(1-u_C^2)^{-1} w \in L^{\infty}(\mathbb{R})$ and $u_C(\pm \ell_C) = 1$, it follows that $v(\pm \ell_C) = w(\pm \ell_C) = 0$. 

Expanding $\Lambda_{\omega=1}(u_C+v+iw)$ in powers of $(v,w)$ and integrating 
by parts for $\int_{\mathbb{R}} u_C'(x) v'(x) dx$ with $v(\pm \ell_C) = 0$ 
and $(1-u_C^2)^{-1} v \in L^{\infty}(\mathbb{R})$ yields a vanishing linear term in $(v,w)$ because $u_C$ is the critical point of the action functional $\Lambda_{\omega=1}(u)$. Continuing the expansion to the quadratic and higher orders in $(v,w)$ yields the following expansion
\begin{equation}
\label{second-var}
\Lambda_{\omega=1}(u_C+v+iw) = \Lambda_{\omega=1}(u_C) + Q_+(v) + Q_-(w) + R(v,w),
\end{equation}
where $Q_{\pm}$ are the quadratic forms given by 
\begin{equation*}
\label{Lplus}
Q_+(v) = \int_{\mathbb{R}} \left[ (v_x)^2 + \frac{(1+u_C^2) v^2}{(1-u_C^2)^2} \right] dx
\end{equation*}
and
\begin{equation*}
\label{Lminus}
Q_-(w) = \int_{\mathbb{R}} \left[ (w_x)^2 + \frac{w^2}{1-u_C^2} \right] dx,
\end{equation*}
whereas $R(v,w)$ is the remainder term given by 
\begin{equation*}
\label{Rem}
R(v,w) = -\int_{\mathbb{R}} \left[ 
\log\left( 1 - \frac{2u_C v + v^2 + w^2}{1-u_C^2}\right) + \frac{2u_C v}{1-u_C^2} + \frac{(1+u_C^2) v^2}{(1-u_C^2)^2} + \frac{w^2}{1-u_C^2} \right] dx.
\end{equation*}
The quadratic forms $Q_{\pm}$ and the remainder term $R(v,w)$ are bounded 
since $v,w \in H^1(\mathbb{R})$ and $(1-u_C^2)^{-1} v, 
(1-u_C^2)^{-1} w \in L^2(\mathbb{R}) \cap L^{\infty}(\mathbb{R})$ are small for the perturbation terms $v,w$. In particular, by Taylor expansion of the logarithmic function, it follows that there exists a positive constant $C$ 
such that it is true for all small perturbation terms $v, w \in H^1_{\ell_C}$ that 
\begin{equation*}
|R(v,w)| \leq C \| (1-u_C^2)^{-1} v \|^3_{L^2 \cap L^{\infty}} + C \| (1-u_C^2)^{-1} w \|^3_{L^2 \cap L^{\infty}},
\end{equation*}
so that the remainder term is cubic with respect to the perturbation terms. 
We claim that there exist $C_{\pm} > 0$ such that 
\begin{equation}
\label{quad-form-coercive}
Q_{\pm}(v) \geq C_{\pm} \| v \|^2_{H^1}, \quad v \in H^1_{\ell_C} \subset H^1(\mathbb{R}),
\end{equation}
hence the quadratic forms are strictly positive and $u_C$ is a strict local minimizer of the action functional $\Lambda_{\omega = 1}(u)$ in $X_{L = \ell_C}$ by the second derivative test.

It remains to prove the bounds (\ref{quad-form-coercive}). Since $v(\pm \ell_C) = w(\pm \ell_C) = 0$, the domain $\mathbb{R}$ is partitioned to $(-\infty,-\ell_C) \cup (-\ell_C,\ell_C) \cup (\ell_C,\infty)$ and each 
quadratic form is considered separately in each interval 
subject to the Dirichlet boundary condition at $x = \pm \ell_C$.

Since $0 < u_C(x) \leq \sqrt{1+ e^{2C}}$, we have 
$$
\frac{1+u_C^2}{(1-u_C^2)^2} \geq \min(1,e^{-4C}),
$$
hence the bound (\ref{quad-form-coercive}) holds for $Q_+$ 
with $C_+ := \min(1,e^{-4C})$.

Since $(1-u_C^2)^{-1}$ is sign-indefinite, special treatment 
is needed for $Q_-$. 
On each interval of the partition $\mathbb{R} = (-\infty,-\ell_C) \cup (-\ell_C,\ell_C) \cup (\ell_C,\infty)$, 
the quadratic form $Q_-$ can be expressed in terms of the differential operator $L_-$ given by 
\begin{equation}
\label{operator-L-minus}
L_- = - \partial_x^2 + \frac{1}{1-u_C^2}. 
\end{equation}
The spectral problem for $L_-$ is set on $(-\infty,-\ell_C)$, $(-\ell_C,\ell_C)$, and $(\ell_C,\infty)$
subject to the Dirichlet conditions at $x = \pm \ell_C$. 
This defines the spectrum of $L_-$ in $L^2((-\infty,-\ell_C) \cup (-\ell_C,\ell_C) \cup (\ell_C,\infty))$ with the domain 
$H^2_0(-\infty,-\ell_C)\cap H^2_0(-\ell_C,\ell_C)\cap H^2_0(\ell_C,\infty)$.

On the other hand, $L_-$ can also be considered in $L^2(\mathbb{R})$ 
with a suitably defined domain in $L^2(\mathbb{R})$. Since $u_C(x) \to 0$ as $|x| \to \infty$ exponentially fast, Weyl's theorem implies that the essential spectrum of $L_-$ in $L^2(\mathbb{R})$ is located on $[1,\infty)$. 
Since $L_- u_C = 0$ with $u_C \in H^1(\mathbb{R})$ 
and $u_C(x) > 0$ for every $x \in \mathbb{R}$,
Sturm's theorem implies that the discrete spectrum of $L_-$ 
in $L^2(\mathbb{R})$ is located in $[0,1)$ and $0$ is a simple eigenvalue 
of $L_-$ in $L^2(\mathbb{R})$. 
 
When $L_-$ is restricted on $L^2((-\infty,-\ell_C) \cup (-\ell_C,\ell_C) \cup (\ell_C,\infty))$ subject to the Dirichlet 
conditions at $x = \pm \ell_C$, the smallest eigenvalue 
of $L_-$ becomes positive in $L^2((-\infty,-\ell_C) \cup (-\ell_C,\ell_C) \cup (\ell_C,\infty))$ because $u_C(\pm \ell_C) = 1 \neq 0$. Hence, the bound (\ref{quad-form-coercive}) holds for $Q_-$ with some $C_-$ given by 
the smallest eigenvalue of $L_-$ in $L^2((-\infty,-\ell_C) \cup (-\ell_C,\ell_C) \cup (\ell_C,\infty))$.
\end{proof}

\begin{remark} 
	\label{remark-cusped}
For the cusped soliton with $\ell_C= 0$ as $C \to -\infty$, the bounds
(\ref{quad-form-coercive}) hold with $C_{\pm} = 1$ since $0 < u(x)
\leq 1$ for all $x \in \mathbb{R}$. It is then not necessary to partition $\mathbb{R}$ into $(-\infty,0) \cap (0,\infty)$ for the proof of these bounds.
\end{remark}

We are now ready to prove Theorem \ref{theorem-main2}. 

By Lemma \ref{lemma-2}, for every $\mu > 0$ and $L > 0$, the critical point of the constrained variational problem 
(\ref{minimizers}) is given by $u_{\omega}(x) = u_C(\sqrt{\omega} x)$ with uniquely defined $C = C_{\mu L}$ and $\omega = \left[ \frac{\ell_{C_{\mu L}}}{L}\right]^2$. 
By Lemma \ref{lemma-3}, this critical point is a local minimizer of the action functional $\Lambda_{\omega}(u)$. From Theorem \ref{theorem-main1}, 
no other critical points satisfying the Euler--Lagrange equation 
(\ref{ode-weak-new}) exist in $X_L$. Therefore, the critical point 
is the global minimizer of mass $Q(u)$ for fixed energy $E(u) = \mu$. 
The proof extends to $\mu > 0$ and $L = 0$ with $u_{\rm cusp}$ replacing $u_C$ 
by Remarks \ref{remark-1} and \ref{remark-cusped}.

\section{Time evolution of perturbations}
\label{sec-3}

In order to corroborate the results regarding the existence and
stability of the
minimizer of the constrained variational problem, 
we investigate the time evolution of perturbations of the solitary
wave with the profile $u_C$ for some uniquely selected $C = C_{\mu
  L}$. We consider perturbations of the singular solitary waves which
do not alter the singularity location at $\pm L$ with $L = \ell_C$, 
in line with our theoretical analysis,
but change the energy level $E(u_C) = \mu$. We do this by perturbing
only
the head portion of the solitary wave on $(-\ell_C,\ell_C)$, while leaving the solution in the outer regions unchanged. The initial condition is given by 
\begin{align}
\label{pertIC}
u_P(x) = 
\begin{cases}
P u_{{\rm head}, C}(x), \quad & |x| < \ell_C
\\
u_{{\rm cusp}}(|x|-\ell_C) , \quad &|x|\geq \ell_C
\end{cases}
\end{align}
where the perturbation factor $P$ is close to $1$, both for $P>1$,
e.g., $P=1.1$, and for $P<1$, such as, e.g., $P=0.9$. 

To perform the time evolution of the NLS equation \eqref{nls}, we use a pseudospectral method. First, we discretize the interval $[-500,500]$ with $N=2000$ points. Next, spatial derivatives $\psi_x$ and $\psi_{xx}$ on the grid are approximated by vectors $D_1\psi$ and $D_2 \psi$ respectively, where $D_1$ and $D_2$ are matrix representations of the first and second derivative operators based on the circulant matrices from \cite{Tr-book}. 
Finally, time integration is performed using the fourth-order Runge-Kutta method, with time step $\Delta t = 0.001$.  

\begin{figure}[hbt]
	\includegraphics[scale=0.45]{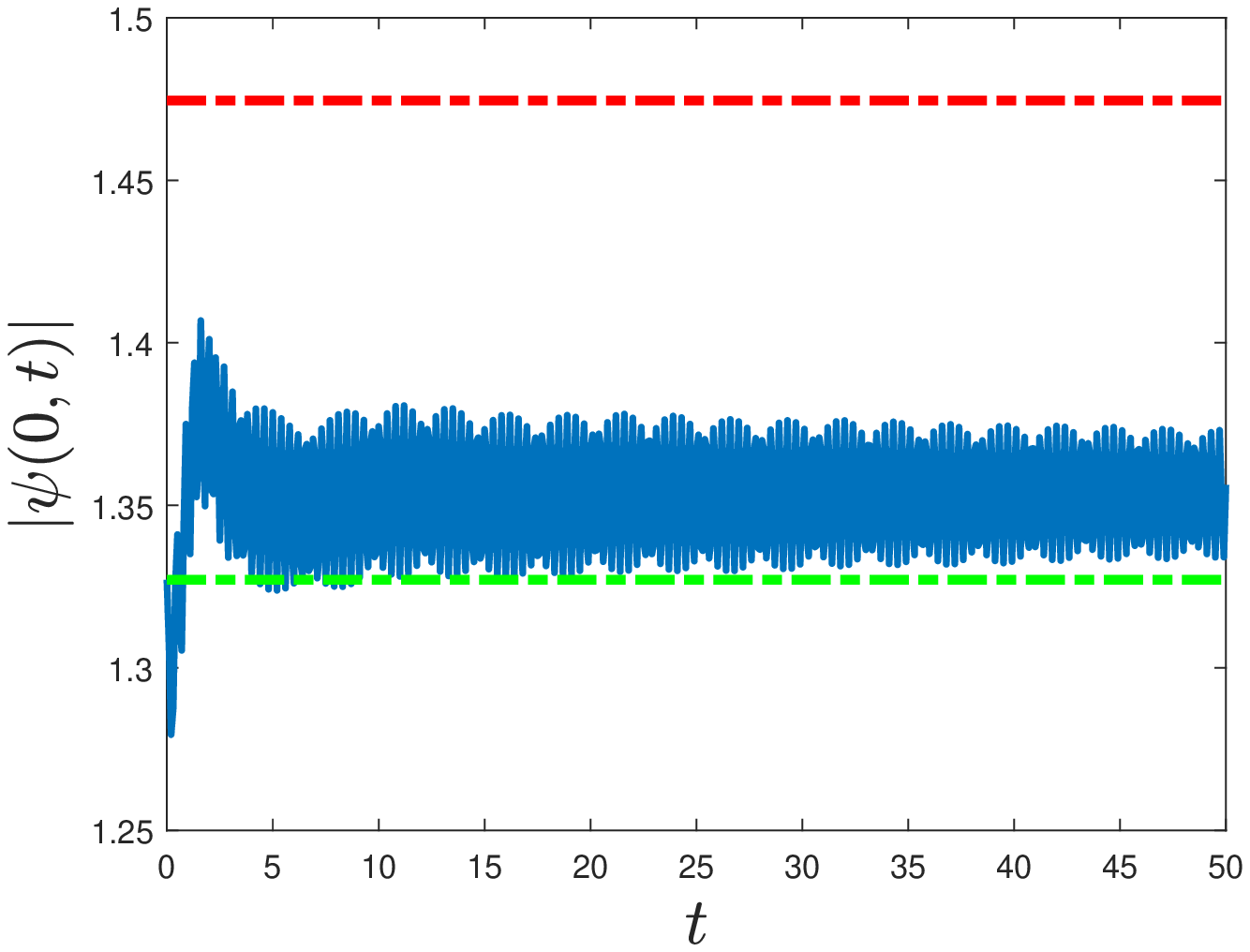}
	\includegraphics[scale=0.45]{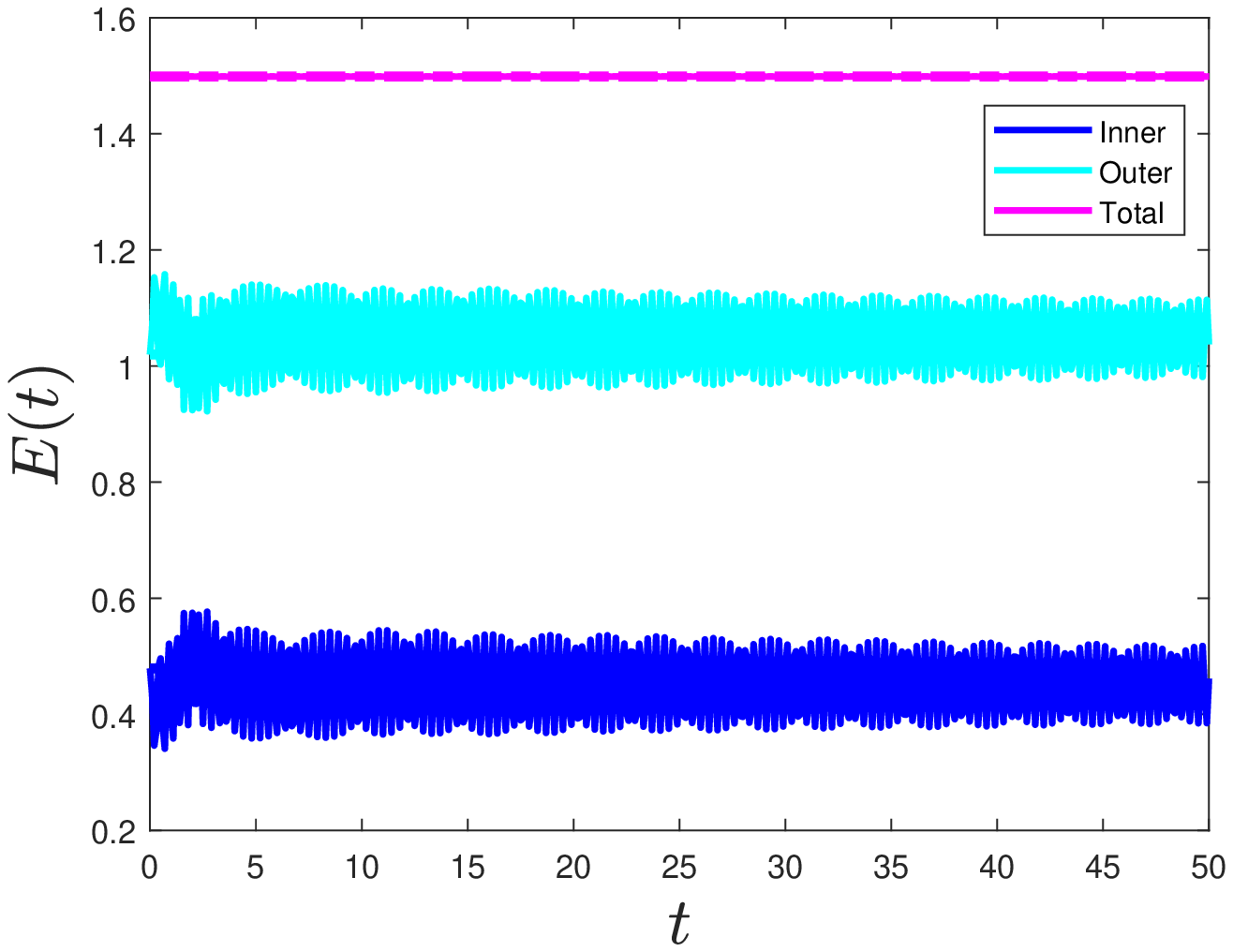}
	\\
	\includegraphics[scale=0.45]{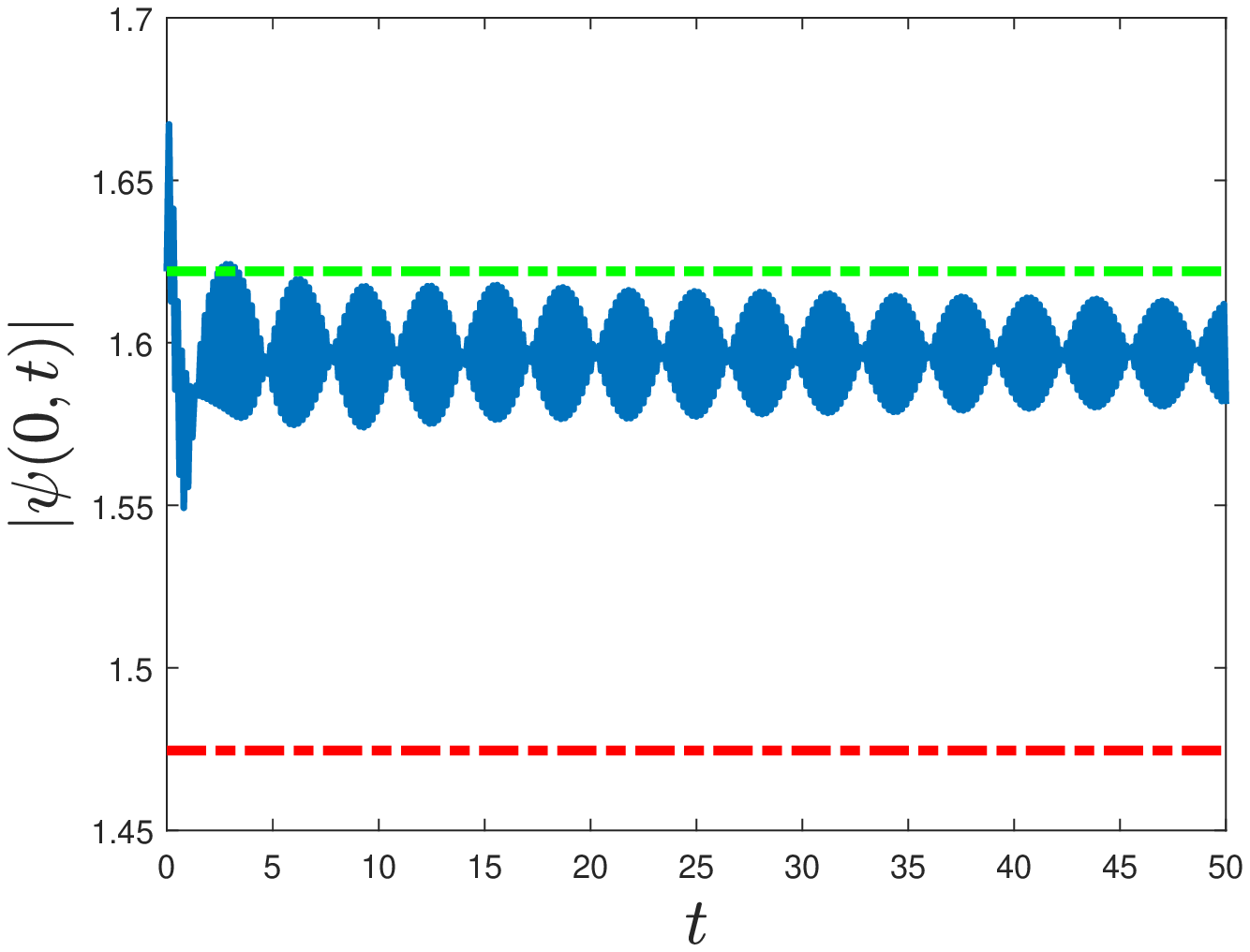}
	\includegraphics[scale=0.45]{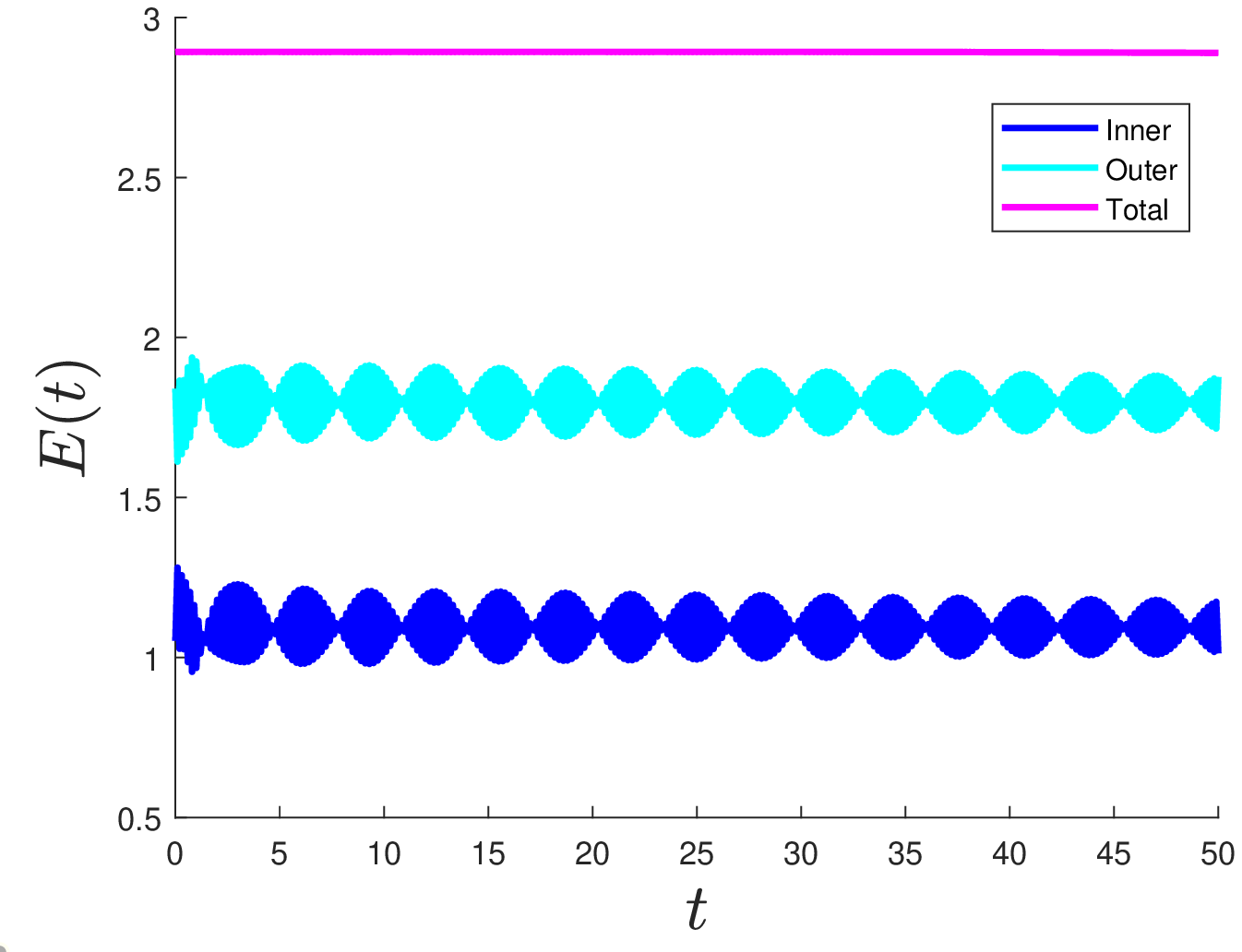}
	\caption{Amplitude (left) and ---inner and outer, as defined
          in the text, as well as
          total-- energies (right) for the time evolution of the initial condition \eqref{pertIC} with $C=0$ (the bell-shaped soliton), where $P=0.9$ (top) and $P=1.1$ (bottom). The green dotted line represents the amplitude of the initial condition, while the red one represents the amplitude of the unperturbed soliton.}
	\label{bell-pert}
\end{figure}

At each time $t$ in the evolution, we compute the energy contained in the inner and outer portions, given by
\begin{align}
E_{{\rm inner}}(t) &= \int_{-L}^{L} |\psi_x(x,t)|^2 \, dx
\\
E_{{\rm outer}}(t) &= \int_{|x|>L} |\psi_x(x,t)|^2 \, dx.
\end{align} 
This gives insight into how energy may be exchanged between the inner and outer regions.
We  also plot the value at the peak $|\psi(0,t)|$. Figures
\ref{bell-pert}--\ref{C=-05-pert} show these diagnostics for three
members of the solitary wave family (i.e., $C=0$, $C=0.5$ and
$C=-0.5$)
with two perturbation factors at
$P=1.1$ and $P=0.9$ (top and bottom panels in each figure, respectively). 
In all the simulations, we observe slowly decaying oscillations around
a different solitary wave near the initial perturbation, suggesting
that the singular solitary waves are stable in the time evolution of
the NLS equation (\ref{nls}). This agrees with the Lyapunov stability
of the solitary wave solutions which follow from the result of Theorem
\ref{theorem-main2}.

A closer inspection of the amplitude of the
wave points to a very slow (presumably power law) decay towards
a new solitary wave equilibrium. Additionally, in our (total) energy
conserving simulations, we observe a very weak exchange of energy
between the inner (head) and the outer regions of the solitary wave.
The latter may be also weakly affected by the approximate
nature of the numerical computations, e.g., by the numerical
approximation of the unit modulus at $x=\pm \ell_C$.

\begin{figure}[hbt]
	\includegraphics[scale=0.45]{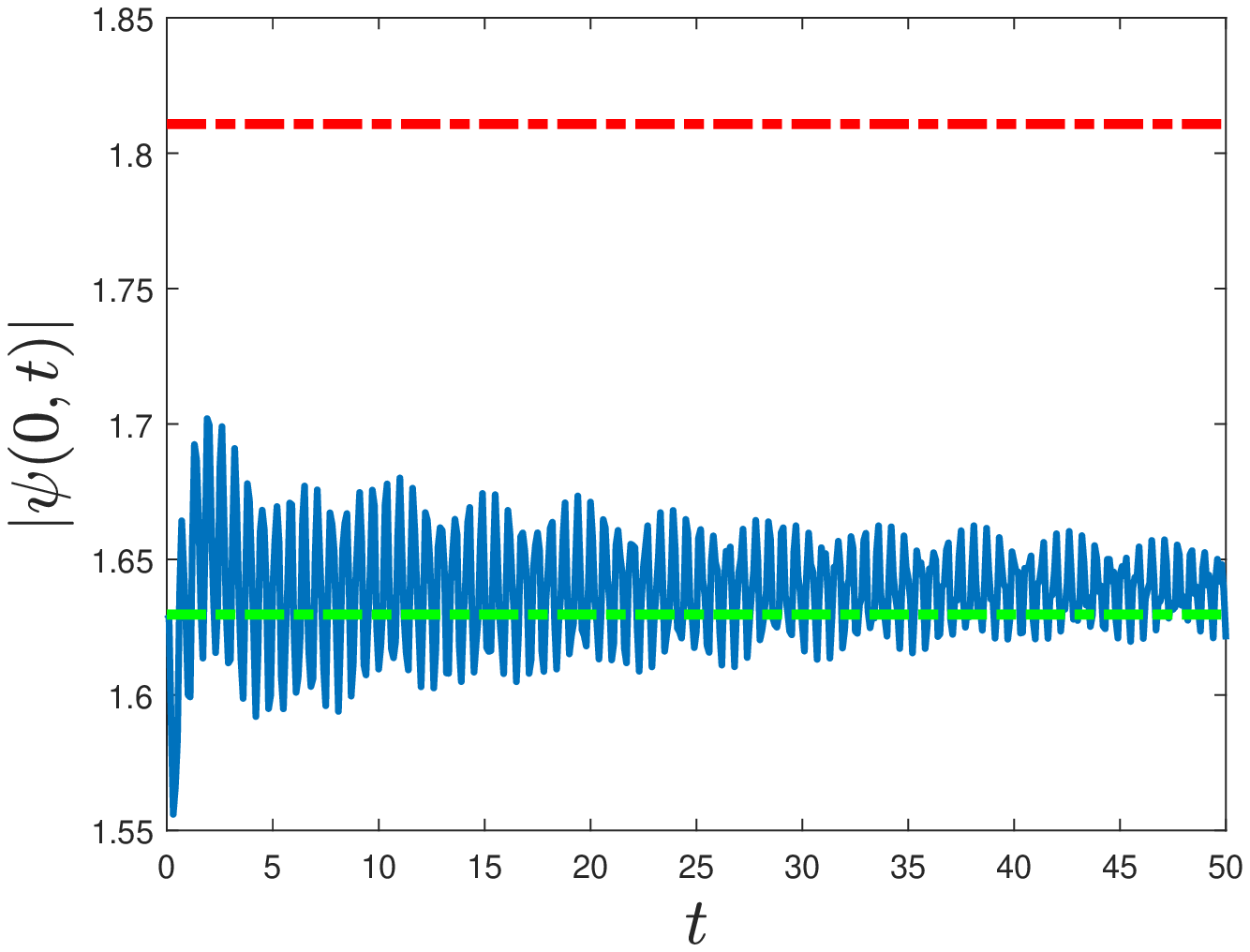}
	\includegraphics[scale=0.45]{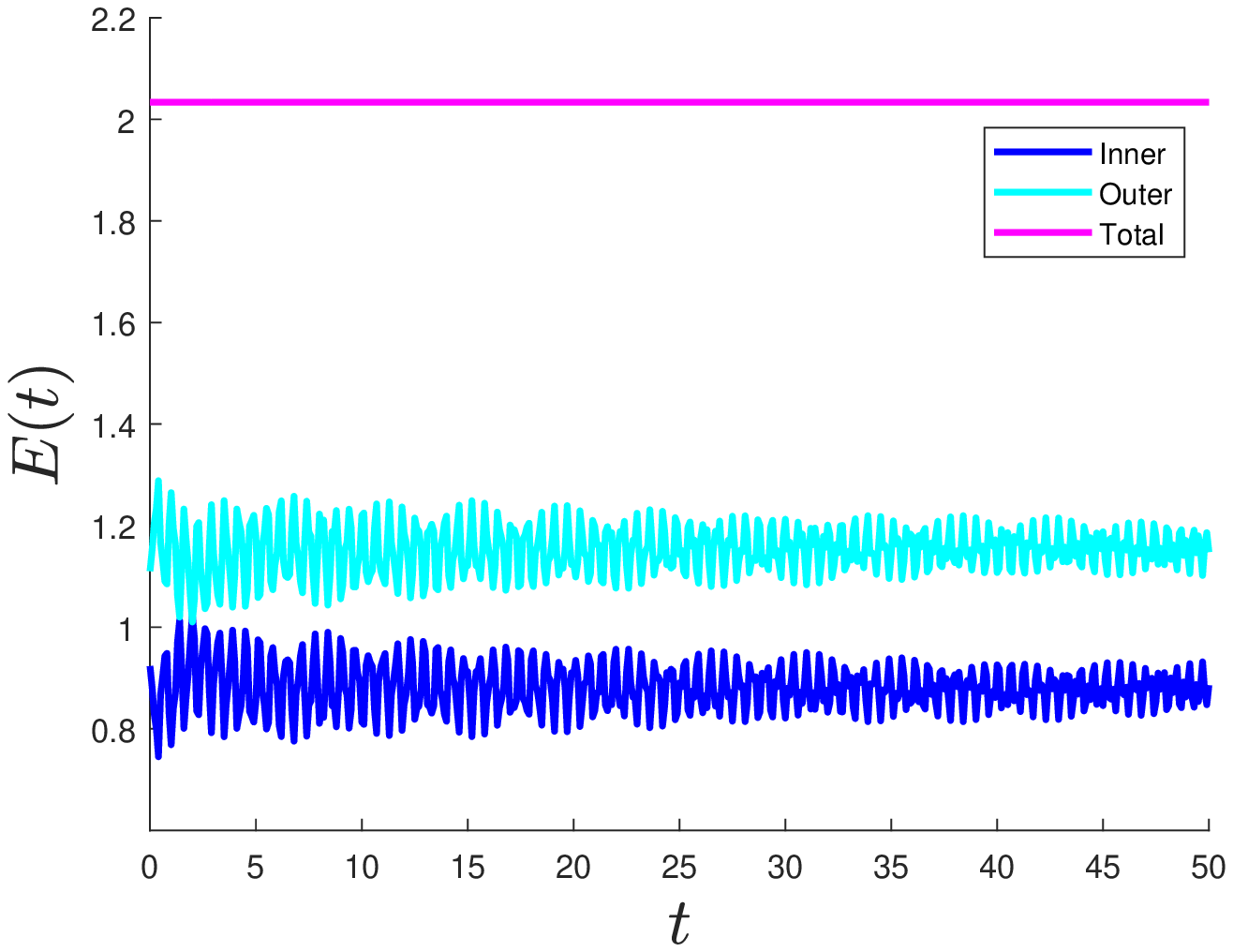} \\ 
	\includegraphics[scale=0.45]{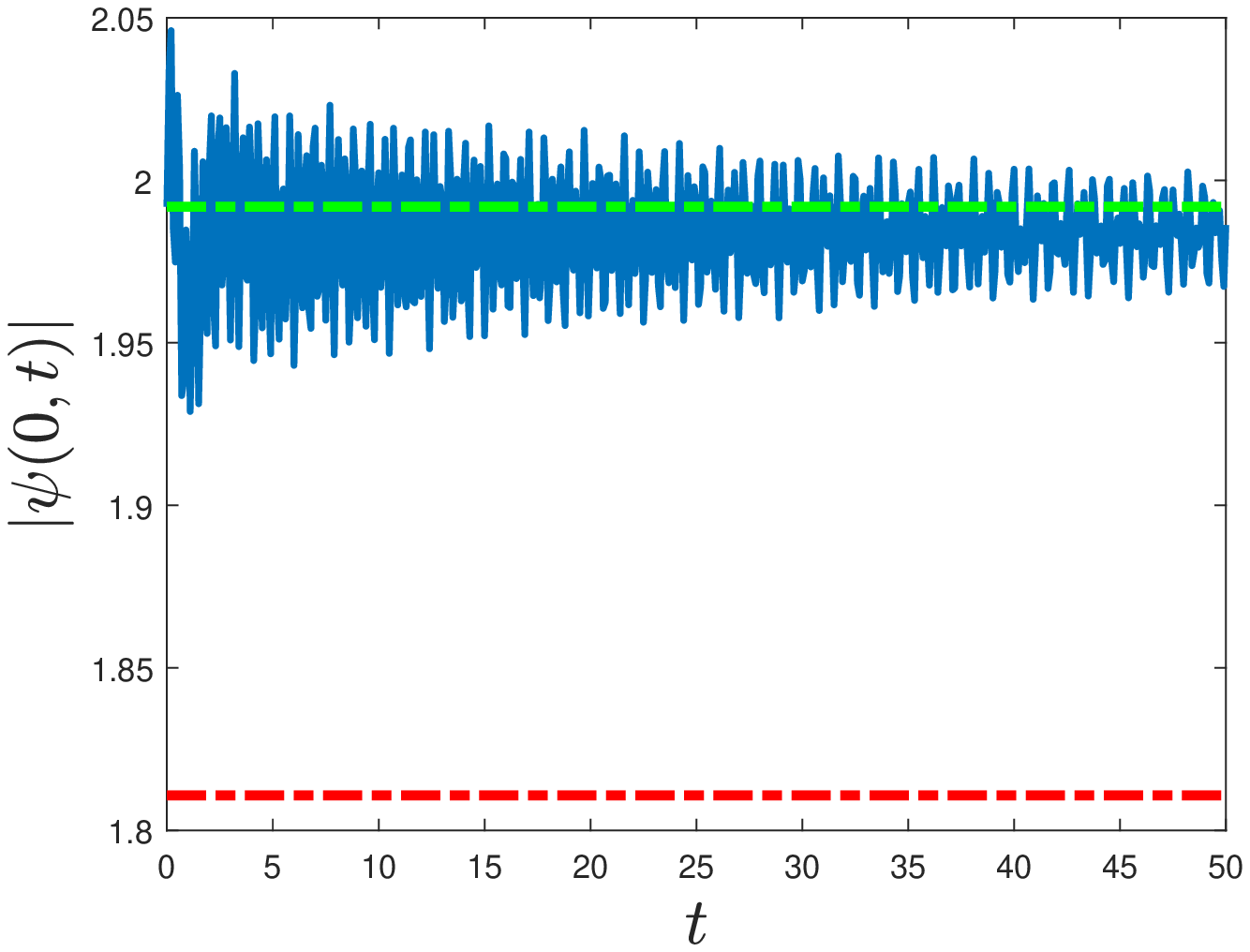}
	\includegraphics[scale=0.45]{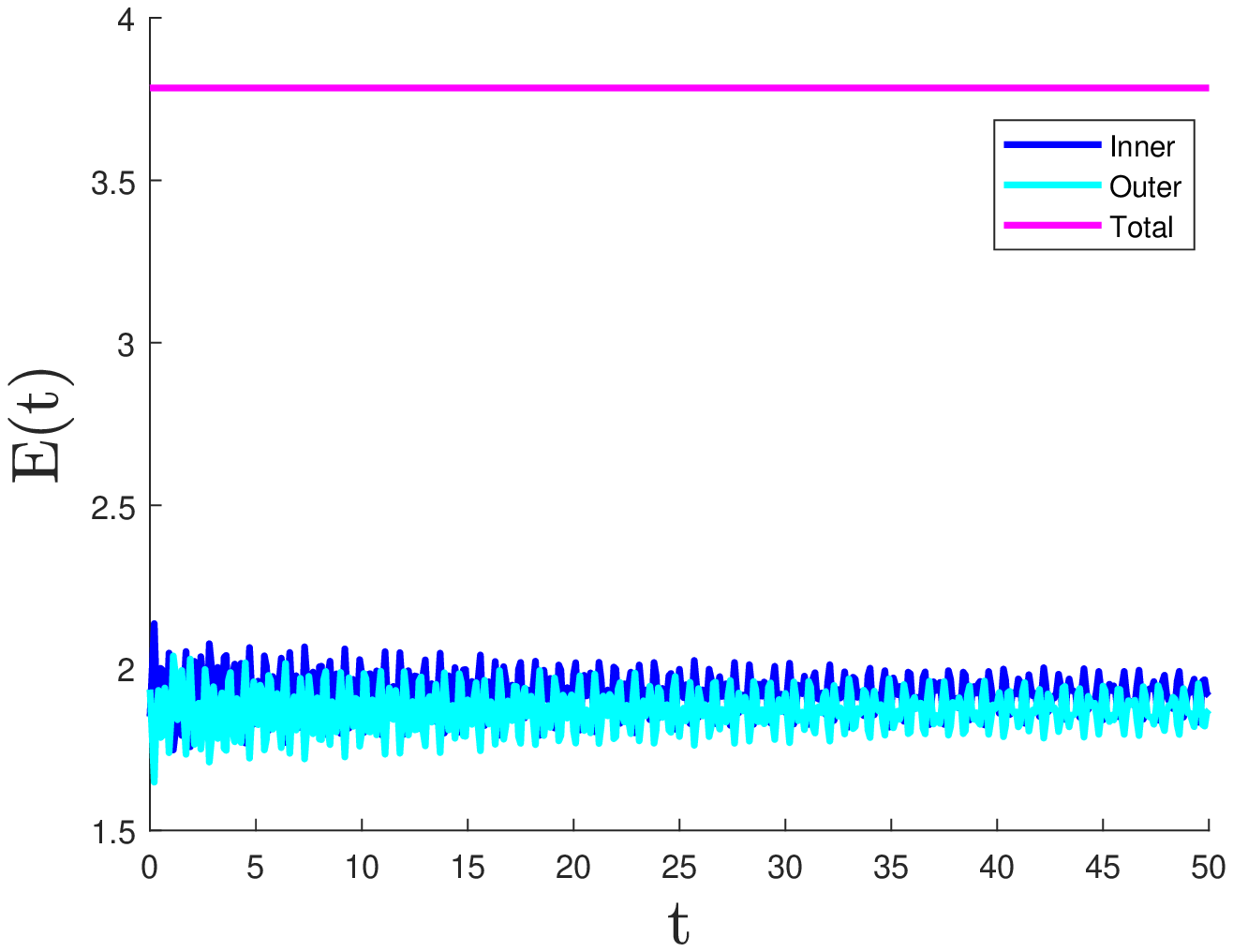}
	\caption{The same as Figure \ref{bell-pert} but for $C=0.5$.}
	\label{C=05-pert}
\end{figure}

\begin{figure}[hbt]
	\includegraphics[scale=0.45]{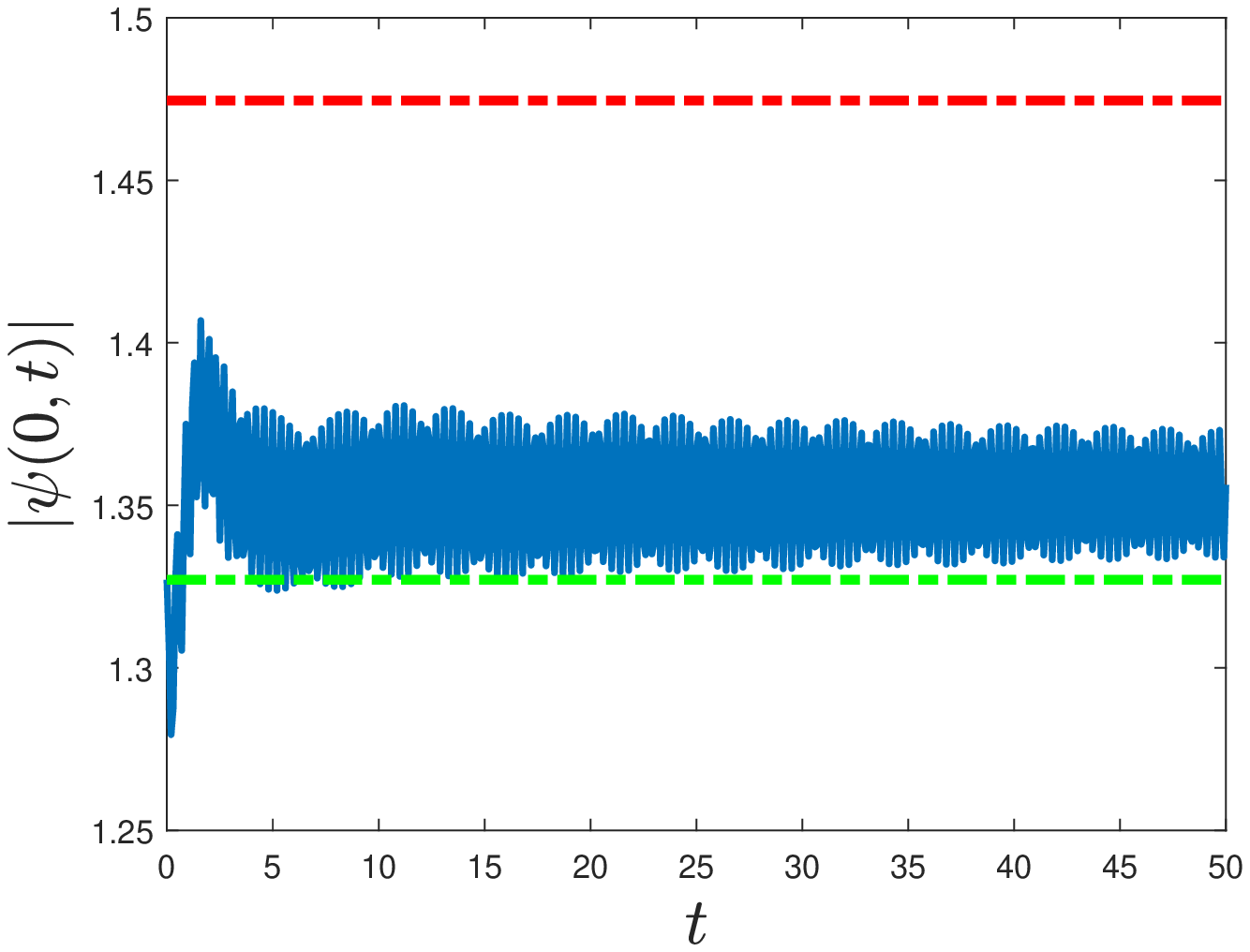}
	\includegraphics[scale=0.45]{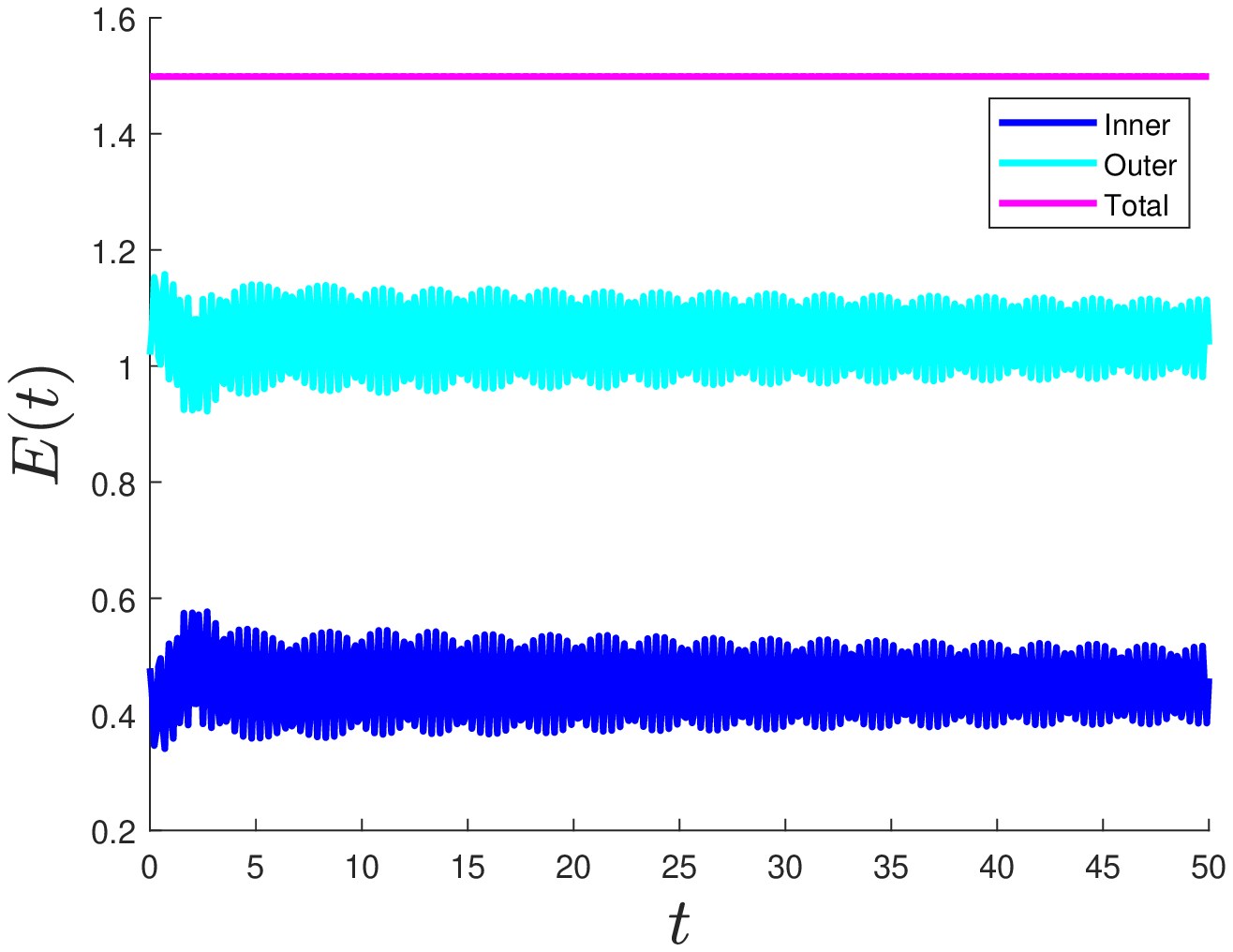}	\\
	\includegraphics[scale=0.45]{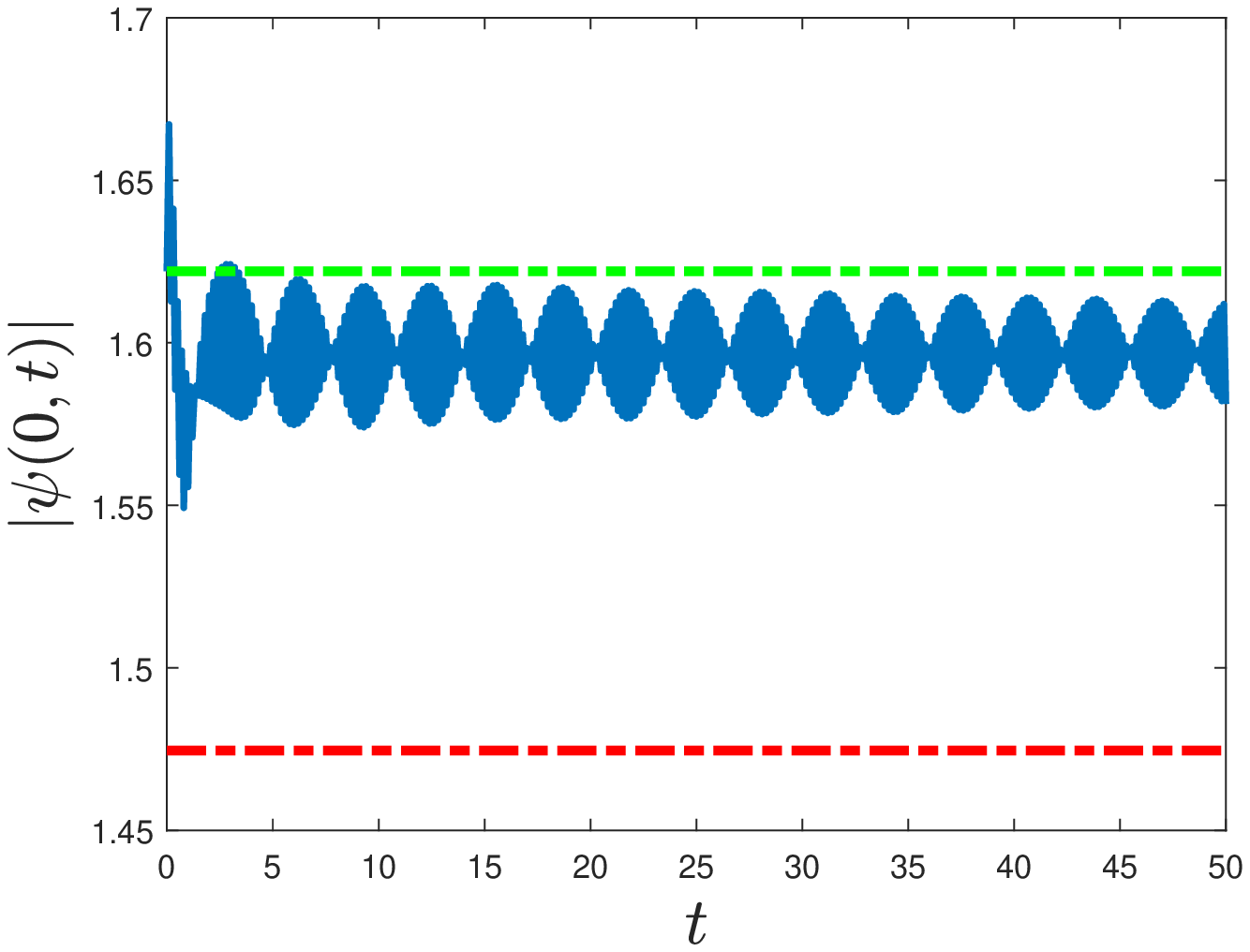}
	\includegraphics[scale=0.45]{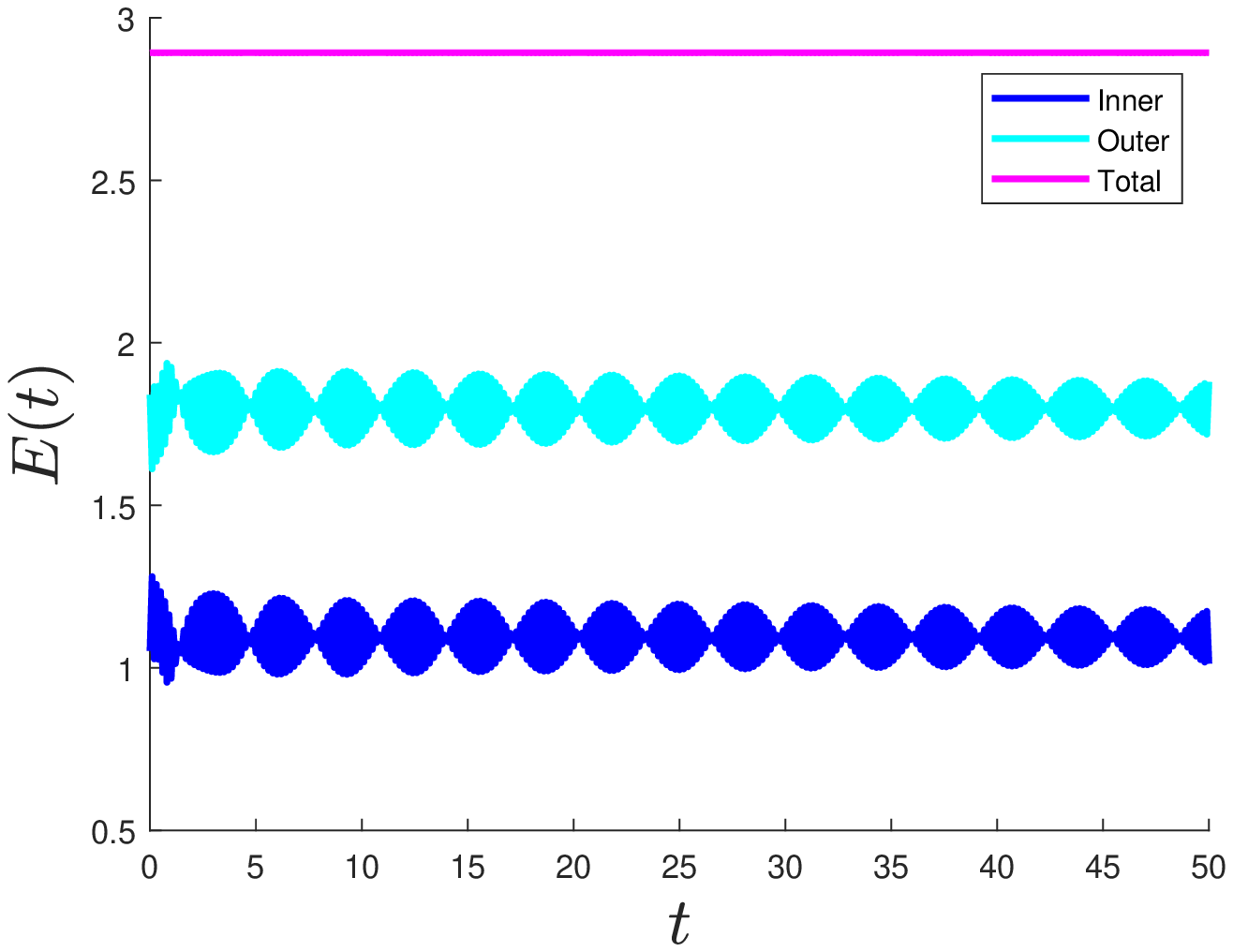}
	\caption{The same as in Figure \ref{bell-pert} but for $C=-0.5$.}
	\label{C=-05-pert}
\end{figure}

We remark that, due to numerical limitations, we are not able to investigate perturbations which change the singularity locations at $\pm L$ while keeping the same energy level $\mu$. In theory, the location of the singularity for these perturbations may change in time, because $\psi_{xx}$ is infinite when $|\psi|=1$, and the term $(1-|\psi|^2)\psi_{xx}$ in the NLS equation \eqref{nls} is indeterminate. In numerical simulations, however, the derivative $\psi_{xx}$ is replaced by a finite approximation, which results in the term $(1-|\psi|^2)\psi_{xx}$ being computed as 0 when $|\psi|=1$, in which case the NLS equation (\ref{nls}) implies that the singularity locations at $\pm L$ are preserved in the time evolution. 

\section{Conclusions}

In the present work, we have provided a variational characterization of solitary waves in a prototypical NLS model with intensity-dependent dispersion.
We have argued that minimization of mass at fixed energy and fixed length of the bell-shaped head is beneficial from an analytical point of view since it allows us to establish Lyapunov stability of the singular solitary waves. 
This expected stability of the solitary waves was confirmed 
by direct dynamical simulations of the NLS model. We have observed in numerical simulations that perturbations of such waveforms lead to a slow relaxation 
of perturbed solitary waves to a new solitary wave within the family.

Among further open problems, we mention the rigorous analysis of well-posedness of the NLS model in the energy space where the solitary waves exist. 
It is also interesting to investigate how the singularity locations can change 
in the time evolution of the solitary waves, our analytical and numerical 
results rely on the fixed location of the singularities. Finally, 
it is interesting to study Lyapunov stability stability of other 
(sign-changing) solitary waves and periodic solutions discussed both in~\cite{OL2020} and~\cite{RKP}. It is also worth exploring 
generalizations of the NLS model in the settings of the discrete
(waveguide) systems, as well as in higher-dimensional systems. 
Such studies are deferred to future publications.


\end{document}